\documentclass[a4paper]{article}
\usepackage{lrmacros_en}
\title{A score function for Bayesian cluster analysis}
\author{John Noble\\ Łukasz Rajkowski}

\usepackage{multirow}
\usepackage{array}
\usepackage{longtable}
\usepackage{natbib}
\usepackage[bordercolor=white,backgroundcolor=gray!30,linecolor=black,colorinlistoftodos]{todonotes}

\newcolumntype{S}{>{\centering\arraybackslash}m{4cm}}

\newcommand{\Wishart}{\cW}
\newcommand{\bx}{\mathbf{x}}

\newcommand{\bp}{\bm{p}}
\newcommand{\tri}{{\blacktriangle}}

\newcommand{\nDelta}{\overline{\Delta}}
\newcommand{\eDelta}{\hat{\Delta}}
\newcommand{\ov}[1]{\overline{#1}}

\newcommand{\Ik}{{\color{black}k}}

\newcommand{\uvar}{u}
\newcommand{\buvar}{\mathbf{u}}

\newcommand{\ase}{\stackrel{\textnormal{a.s.}}{=}}
\newcommand{\asapp}{\stackrel{\textnormal{a.s.}}{\approx}}
\newcommand{\asapprox}{\stackrel{\textnormal{a.s.}}{\approx}}
\newcommand{\prt}{\Pi}

\makeatletter
\newcommand*{\shifttext}[2]{%
  \settowidth{\@tempdima}{#2}%
  \makebox[\@tempdima]{\hspace*{#1}#2}%
}
\makeatother


\numberwithin{equation}{section}

\theoremstyle{plain}
\newtheorem{thm}{Theorem}
\numberwithin{thm}{section}

\newtheorem{lem}[thm]{Lemma}

\newtheorem{prop}{Proposition}
\newtheorem{proper}{Property}

\theoremstyle{definition}
\newtheorem{dfn}[thm]{Definition}
\newtheorem*{dfn*}{Definition}
\newtheorem{exmp}[thm]{Example}

\newtheorem*{ntn*}{Notation}


\usepackage[normalem]{ulem}
\newcounter{nchng}\renewcommand{\thenchng}{c\arabic{nchng}}
\newcommand{\note}[1]{\stepcounter{nchng}{\color{blue}$^{\thenchng}$}
\let\thefootnote\relax\footnote{\color{blue}$^{\thenchng}$#1}}

\begin{document}
\maketitle
\abstract{We propose a score function for Bayesian clustering.
The function is parameter free and captures the interplay
between the within cluster variance and the between cluster entropy of a
clustering. It can be used to choose the number of clusters in
well-established clustering methods such as
hierarchical clustering or $K$-means algorithm.}

\section{Introduction}
\setlength{\parindent}{0cm}

Many clustering methods generate a family of clusterings that depend
on some user-defined parameters.
The most prominent example is the $K$-means algorithm, where the investigator
has to specify the number of clusters. Similarly, in hierarchical
clustering, a whole family of clusterings is obtained, starting from
the finest partition into singletons and ending in the coarsest clustering,
i.e. a single cluster. Again, the investigator chooses the number of clusters
based on the dendrogram.

\smallskip
All these methods come with a
variety of suggestions how to choose the optimal number of clusters. Some of
these are rather heuristic in nature, while others have deep theoretical foundations. For the
$K$-means algorithm these include \textit{the elbow method} or \textit{average
silhouette method} (\cite{bib:rousseeuw1987silhouettes}). Another solution is to use a \textit{score statistic} 
(a function which is intended to measure the quality of a clustering) and among
different clusterings proposed by a given method choose the one that maximises the
score statistic. Constructing score statistics is not a trivial task; one of the
most popular choices is \textit{the gap statistic}
(\cite{bib:tibshirani2001estimating}). 

\smallskip
In this article we
propose a new score statistic. It is derived as a limit of the first order
approximation to the posterior probability (up to the norming constant) in a
Nonparametric Bayesian Mixture Model with the inverse Wishart
distribution as a base measure for the within group covariance matrices and the
Gaussian distribution as a base measure for the cluster means and the component
measure.
In order to derive the limit we assume that the data is an independent
sample from some `input' probability distribution on the observation space; this gives a
method of assessing the compatibility of the \textit{partitions of the observation space} to the input
distribution. The score function is obtained by taking the empirical measure as
the input distribution and tweaking it slightly so that it is well defined on
all possible data clusterings.

\subsection{Contribution and Results}\label{sec:ctr}

Our main contribution is the formulation of a novel score function for
clusterings, which is motivated theoretically and performs well on analysed
datasets. Suppose that we have a sequence of observations
$x_1,\ldots,x_n\in\R^d$ and we believe that it consists of several groups
and within every group the data is distributed according to some Gaussian
distribution (with unknown mean and covariance matrix). 
The goal is to construct a simple function that measures how well a given
clustering of the dataset corresponds to the assumption of being Gaussially
distributed within clusters. 
Our proposition is the following: for $I\subset[n]$ we define $\ov{\bx_I}=\re{|I|}\sum_{i\in I} x_i$
and $\hat{\bV}_\bx(I)=\re{|I|}\sum_{i\in I}
(x_i-\ov{\bx_{[n]}})(x_i-\ov{\bx_{[n]}})^t$ and for the notational simplicity denote
$\hat{\bV}_\bx:=\hat{\bV}_\bx([n])$. For $\bx=(x_1,\ldots,x_n)$ and
$\cI$ -- a partition of $[n]=\{1,2,\ldots,n\}$ let

\begin{equation}\label{eq:timon}
\cD(\bx,\cI):=
-\re{2}\sum_{I\in\cI}
\frac{|I|}{n}\ln\det\Big(\frac{\hat{\bV}_\bx}{|I|}+\hat{\bV}_\bx(I)\Big)
+\sum_{I\in\cI} \frac{|I|}{n}\ln\frac{|I|}{n}.
\end{equation}

It should be noted that if $\bx$ is a realisation of a random independent sample
$X_1,\ldots,X_n$ from some distribution $P$ on $\cX$, then the components of the
formula \eqref{eq:timon} can be treated as empirical estimates of relevant
probabilities or the conditional covariance matrices. This is actually how
\eqref{eq:timon} is obtained; we investigate the details in \Cref{sec:deriv}.
This remark may be also convenient when dealing with large datasets where the
exact computation of \eqref{eq:timon} could be time consuming. In such case we
can approximate the variance components of \eqref{eq:timon} by using the random samples
from clusters.

\section{Score functions and the main formula}

\subsection{Basic definitions}\label{sec:score}
We start our presentation with a formal definition of a \textit{score function},
intended to measure the quality of the data clustering. 

\begin{ntn*}
For $n\in\N$ let $[n]=\{1,\ldots,n\}$ and let $\prt_n$ be the set of all
partitions of $[n]$. 

Let $\cX=\R^d$ be the observation space. Let $\cO=\bigcup_{n=1}^\infty
\cX^n\times \prt_n$ be the set of all possible finite sequences of observations
and their partitions and let $\ov{\R}=\R\cup\{-\infty,\infty\}$.
\end{ntn*}

\begin{dfn*}
A \textit{clustering score function} is any function 
$\cS\colon \cO\to\ov{\R}$. 
\end{dfn*}

\begin{dfn*}
Let $\cS$ be a score function and let $\cF$ be a family of functions from
$\cX$ to $\cX$. We say that $\cS$ is \emph{robust to \cF} if for
every $\bx=(x_1,\ldots,x_n)\in\cX^n$ and $\cI,\cJ\in\prt_n$ and every $f\in\cF$ 
we have $\cS(\bx,\cI)\leq \cS(\bx,\cJ)$ if and only if $\cS(f(\bx),\cI)\leq
\cS(f(\bx),\cJ)$, where $f(\bx)=\big(f(x_1),\ldots,f(x_n)\big)$.
\end{dfn*}

Hence robustness to $\cF$ means that if we apply any function $f\in\cF$ to all
observations, the optimal clustering indicated by the score function will not alter. 
If no prior knowledge about the clustering structure is available, a
natural demand from a score function is to be robust to linear isomorphisms of
$\cX$. In particular, it should be robust to scaling of the axes since it would
be strange if the result of applying the score function would depend on the
units used to measure the observation. For the similar reasons, we expect a good
score function to be robust to translations.

\smallskip
Note, that on the other hand the robustness to \emph{all} linear transformation would be undesirable -- in
particular, moving all points to the origin is a linear transformation and we
do not expect any clusters to be seen after applying it.

\begin{ntn*}
Let $\cA$ and $\cB$ be two partitions of the same set. We say that $\cA$ is
\emph{finer} than $\cB$ if for every $A\in\cA$ there exist $B\in\cB$ such that
$A\subset B$. Equivalently, we say that $\cB$ is \emph{coarser} than $\cA$ and
we write $\cA\preceq\cB$.
\end{ntn*}

\begin{dfn*}
Let $\cS$ be a clustering score function. We say that it is \emph{non-increasing} if
for every $\bx\in\cX^n$ and $\cI,\cJ\in\prt_n$ such that $\cI\preceq \cJ$ we
have $\cS(\bx,\cI)\leq \cS(\bx,\cJ)$. If $-\cS$ is non-increasing then
$\cS$ is \emph{non-decreasing}.
\end{dfn*}

Clearly, no non-decreasing score function would be good for clustering
purposes as 
it would assign the highest score to the clustering into one full cluster,
regardless of the data. Similarly, a non-increasing function gives the highest
score to the partition of singletons. It seems desirable for this two tendencies
to interplay and it is theoretically appealing to find increasing and decreasing
parts in a given score function.

\subsection{Properties of the $\cD$ score function}\label{sec:properties}
\begin{ntn*}
To facilitate the notation in the remaining part of the text we use $|\Sigma|$
to denote the determinant of a square matrix $\Sigma$.
\end{ntn*}
\begin{dfn*}

With the notation presented in \Cref{sec:ctr} we define

\begin{equation}\label{eq:deltasigma}
\cD_\Sigma(\bx,\cI):=
-\re{2}\sum_{I\in\cI}
\frac{|I|}{n}\ln\Big|\frac{\Sigma}{|I|}+\hat{\bV}_\bx(I)\Big|
+\sum_{I\in\cI} \frac{|I|}{n}\ln\frac{|I|}{n}.
\end{equation}

and then $\cD(\bx,\cI)=\cD_{\hat{\bV}_x}(\bx,\cI)$ (which is equivalent to
\eqref{eq:timon}). Moreover, we use $\cD_0$ to denote $\cD_\Sigma$ with
$\Sigma$ being a matrix of zeroes.

\end{dfn*}
\begin{proper}
Let $x_1,\ldots,x_n\in\cX$ such that $x_1,\ldots,x_n$ span $\cX$. Let $\bx=(x_1,\ldots,x_n)$. Then $|\cD(\bx, \cI)|<\infty$
for any $\cI\in\prt_n$.
\end{proper}
\begin{proof}
For any $v\in\R^d$
\begin{equation}
v^t\left(\sum_{i\in I}(x_i-\ov{bx_I})(x_i-\ov{bx_I})^t\right)v=
\sum_{i\in I}\big(v^t(x_i-\ov{bx_I})\big)^2\geq 0
\end{equation}
and hence $\sum_{i\in I}(x_i-\ov{bx_I})(x_i-\ov{bx_I})^t$ is non-negative
definite. Moreover, it follows
from the assumptions that $\hat{\bV}_\bx$ is positive definite.
A sum of non-negative and positive definite matrix is positive definite, so its
determinant is positive. Therefore all the summands in \eqref{eq:timon} are
finite and the proof follows.
\end{proof}

\begin{proper}
The score function $\cD$ is robust to translations and linear isomorphisms.
\end{proper}
\begin{proof}
It is easy to check that for any $\bx\in\cX^n$, $\cI\in\prt_n$ and any
translation $T$ we have $\cD(\bx,\cI)=\cD\big(T(\bx),\cI\big)$ and hence
robustness to translations.

\smallskip
Let $L\colon\cX\to\cX$ be a linear automorphism, defined by $L(x)=Ax$, where
$A$ is a $n\times n$ invertible matrix. Then

\begin{equation}
\begin{split}
\cD\big(L(\bx),\cI\big)&=
-\re{2}\sum_{I\in\cI}
\frac{|I|}{n}\ln\Big|\re{|I|}A\hat{\bV}_\bx A^t+\re{|I|}\sum_{i\in
I}A(x_i-\ov{\bx_I})(x_i-\ov{\bx_I})^t A^t\Big|
+\sum_{I\in\cI} \frac{|I|}{n}\ln\frac{|I|}{n}=\\
&=
-\re{2}\sum_{I\in\cI}
\frac{|I|}{n}\ln\Big|A\big(\re{|I|}\hat{\bV}_\bx +\re{|I|}\sum_{i\in
I}(x_i-\ov{\bx_I})(x_i-\ov{\bx_I})^t\big) A^t\Big|
+\sum_{I\in\cI} \frac{|I|}{n}\ln\frac{|I|}{n}=\\
&=
-\re{2}\sum_{I\in\cI}
\frac{|I|}{n}\ln\Big(|A|\Big|\re{|I|}\hat{\bV}_\bx +\re{|I|}\sum_{i\in
I}(x_i-\ov{\bx_I})(x_i-\ov{\bx_I})^t\Big| |A^t|\Big)
+\sum_{I\in\cI} \frac{|I|}{n}\ln\frac{|I|}{n}=\\
&=
\cD(\bx,\cI)-\ln|A|,
\end{split}
\end{equation}
which clearly implies robustness to linear isomorphisms.
\end{proof}

\begin{proper}

\begin{enumerate}[(a)]
\item $\sum_{I\in\cI} \frac{|I|}{n}\ln\frac{|I|}{n}$ is increasing
\item $-\sum_{I\in\cI} \frac{|I|}{n}\ln\Big|\hat{\bV}_\bx(I)\Big|$ is
decreasing
\item $-\sum_{I\in\cI} \frac{|I|}{n}\ln\Big|\frac{\Sigma}{|I|}\Big|$ is
increasing
\end{enumerate}

\end{proper}
\begin{proof}

The proof of parts (a) and (b) follow from \Cref{res:disagglo} by taking the
empirical measure instead of $P$. Part (c) follows from (a) because

\begin{equation}
-\sum_{I\in\cI} \frac{|I|}{n}\ln\Big|\frac{\Sigma}{|I|}\Big|
=
d\sum_{I\in\cI} \frac{|I|}{n}\ln\frac{|I|}{n} + d\ln n - \ln|\Sigma|.
\end{equation}
\end{proof}

\section{The derivation}\label{sec:deriv}
In this section we give the theoretical foundations for considering the function
$\cD$ as clustering score function. We present a general formulation of a Bayesian
Mixture Model and then we concentrate on the case where the data within clusters
are distributed as Gaussians. 

We analyse the asymptotics of
the formula for the (unnormalised) posterior in this model. In this way we
concentrate on scoring the partitions of the observation space rather than the data
themselves. However, it is easy to switch to the score statistic by considering
an empirical counterpart of $P$ instead of $P$; this yields
$\cD_0$ (cf. \eqref{eq:deltasigma}).
The general form of \eqref{eq:deltasigma} is constructed to prevent the 
function $\cD_0$ from assigning an infinite score to clusterings with very small
clusters (of size less than the dimension of the observation space); on the other hand when the clusters are large enough, then $\cD$
approximates $\cD_0$.

\subsection{Bayesian Mixture Models}
Let $\Theta\subset\R^p$ be the parameter space and $\{ G_\theta\colon
\theta\in\Theta \}$ be
a family of probability measures on the observation space $\R^d$. Consider a
prior distribution $\pi$ on $\Theta$.
Let $\nu$ be a probability
distribution on the $m$-dimensional simplex
$\Delta^m=\{\bm{p}=(p_i)_{i=1}^m\colon \textrm{$\sum_{i=1}^m p_i=1$ and
$p_i\geq 0$ for $i\leq m$}\}$ (where $m\in\N\cup\{\infty\}$). Let 
\begin{equation}\label{eq:bmm1}
\begin{array}{rcl}
\bm{p}=(p_i)_{i=1}^m&\sim& \nu \\
\bm{\theta}=(\theta_i)_{i=1}^m&\iid& \pi \\
\bx=(x_1,\ldots,x_n) \cond \bm{p},\bm{\theta}&\iid& \sum_{i=1}^m p_i
G_{\theta_i}.
\end{array}
\end{equation}
This is a \textit{Bayesian Mixture Model}. If $G_\theta$ a
Gaussian distribution for all $\theta\in\Theta$, we say that \eqref{eq:bmm1} defines a \textit{Bayesian
Mixture of Gaussians}. In this case a convenient choice of the parameter space
is $\Theta=\R^d\times \cS^+_d$, where $\cS^+_d$ is the space of positive
definite $d\times d$ matrices. Then for $\theta=(\mu,\Lambda)$ the distribution
$G_\theta$ is the multivariate normal distribution
$\cN(\mu,\Lambda)$. A conjugate prior distribution $\pi$ on $\Theta$ is the
\textit{Normal-inverse-Wishart} distribution, which is given by
\begin{equation}\label{eq:NIWmodel}
\begin{array}{rcl}
\Lambda&\sim&\Wishart^{-1}(\eta_0+d+1,\eta_0\Sigma_0)\\
\mu\cond\Lambda&\sim&\Normal(\mu_0,\Lambda/\kappa_0)
\end{array}
\end{equation}
Here $\Wishart^{-1}$ denotes the \textit{inverse Wishart} distribution and the hyperparameters are $\kappa_0,\eta_0>0$, $\mu_0\in\R^d$ and
$\Sigma_0\in\cS^+$. This prior is listed in \cite{bib:Gelman2013bayesian} with a slightly different
hyperparameters, but we made this modification to obtain
\begin{equation}\label{eq:NIW_expect}
\begin{array}{rl}
\E\Lambda&=\Sigma_0,\\
\bV(\mu)&=\E\bV(\mu\cond \Lambda)+\bV\E(\mu\cond\Lambda)=\E
\Lambda/\kappa_0+\bV(\mu_0)=\Sigma_0/\kappa_0,
\end{array}
\end{equation}
which gives a nice interpretation of the hyperparameters.

\smallskip
Formula \eqref{eq:bmm1} can model data clustering; clusters are defined by deciding which
$G_{\theta_i}$ generated a given data point. In order to formally define the
clusters, we need to rewrite \eqref{eq:bmm1} as
\begin{equation}\label{eq:bmm2}
\begin{array}{rcl}
\bm{p}=(p_i)_{i=1}^m&\sim& \nu \\
\bm{\theta}=(\theta_i)_{i=1}^m&\iid& \pi \\
\bm{\phi}=(\phi_1,\ldots,\phi_n) \cond \bm{p},\bm{\theta}&\iid& \sum_{i=1}^m p_i
\delta_{\theta_i}\\
x_i\cond \bm{p},\bm{\theta},\bm{\phi}&\sim& G_{\phi_i} \quad\textrm{ independently
for all $i\leq n$.}
\end{array}
\end{equation}
Then the clusters are the
classes of abstraction of the equivalence relation $i\sim j\equiv \phi_i=\phi_j$.
In this way the distribution $\nu$ on the $m$ dimensional simplex \textit{generates} a
probability distribution $\cP_{ \nu, n }$ on the partitions of set $[n]$ into at most $m$ subsets.

\begin{exmp}
Let $V_1,V_2,\ldots\iid \Beta(1,\alpha)$, $p_1=V_1$,
$p_k=V_k\prod_{i=1}^{k-1}(1-V_i)$ for $k>1$. Let $\nu$ to be the distribution of
$\bm{p}=(p_1,p_2,\ldots)$. The probability on the space of partitions of
$[n]$ that $\nu$ generates is the Generalized Polya Urn Scheme (\cite{bib:Blackwell1973ferguson})
also known as the Chinese Restaurant Process (\cite{bib:Aldous1985exchangeability}) with
the probability weight given by
\begin{equation}
\cP_{ \nu, n }(\cI)=\frac{\alpha^{|\cI|}}{\alpha^{(n)}}\prod_{I\in\cI}(|I|-1)!,
\end{equation}
where $\alpha^{(n)}=\alpha(\alpha+1)\ldots
(\alpha+n-1)$. 
\end{exmp}

\begin{lem}
Let $\nu$ be a probability distribution on $\Delta^m$ that generates a
probability $\cP_{ \nu, n }$ on the partitions of $[n]$. Then for every partition
$\cI$ of $[n]$
\begin{equation}\label{eq:pi_to_erp}
\cP_{ \nu, n }(\cI)=
\int_{\Delta^m}
\sum_{\psi\colon \cI\stackrel{1-1}{\to} [m]} \prod_{I\in\cI} p_{\psi(I)}^{|I|}
\d{\nu(\bp)}
\end{equation}
where the ,,middle sum'' ranges over all injective functions from $\cI$ to
$[m]$ (with the convention $[\infty]=\N$).
\end{lem}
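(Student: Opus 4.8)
The plan is to condition on the mixing weights $\bp=(p_i)_{i=1}^m$ and to reduce the partition generated by $\bm\phi$ to a partition generated by the component allocations. For each $i\le n$ introduce the allocation label $c_i\in[m]$ determined by $\phi_i=\theta_{c_i}$; by the construction \eqref{eq:bmm2}, conditionally on $\bp$ the labels $c_1,\ldots,c_n$ are independent with $\Pr(c_i=k\cond\bp)=p_k$. Since the base measure $\pi$ is non-atomic, the atoms $\theta_1,\theta_2,\ldots$ are almost surely pairwise distinct, so that $\phi_i=\phi_j$ if and only if $c_i=c_j$. Consequently the partition induced by the equivalence $\phi_i=\phi_j$ coincides almost surely with the partition $\cI_c$ whose blocks are the level sets of the map $i\mapsto c_i$, and it therefore suffices to compute $\Pr(\cI_c=\cI\cond\bp)$ and integrate against $\nu$.

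The core of the argument is a counting step identifying the event $\{\cI_c=\cI\}$. I claim that $\cI_c=\cI$ holds exactly when $c$ is constant on each block of $\cI$ and takes distinct values on distinct blocks; equivalently, exactly when there is an injection $\psi\colon\cI\stackrel{1-1}{\to}[m]$ with $c_i=\psi(I)$ for every $I\in\cI$ and every $i\in I$. Indeed, constancy on blocks guarantees that each block of $\cI$ lies inside a level set of $c$, while distinctness across blocks prevents two blocks from merging into a single level set, so that $\cI_c=\cI$. Moreover $\psi$ is recovered from such a $c$ by $\psi(I)=c_i$ for any $i\in I$, so the events indexed by different injections $\psi$ are pairwise disjoint and their union is precisely $\{\cI_c=\cI\}$.

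Using the conditional independence of the labels, each such event has probability
\begin{equation}
\Pr\big(c_i=\psi(I)\text{ for all }I\in\cI\text{ and }i\in I \cond \bp\big)
=\prod_{I\in\cI}\prod_{i\in I}p_{\psi(I)}
=\prod_{I\in\cI}p_{\psi(I)}^{|I|}.
\end{equation}
Summing over the injections gives $\Pr(\cI_c=\cI\cond\bp)=\sum_{\psi\colon\cI\stackrel{1-1}{\to}[m]}\prod_{I\in\cI}p_{\psi(I)}^{|I|}$. Integrating this identity against $\nu$ over $\Delta^m$ and interchanging the (nonnegative) sum with the integral by Tonelli's theorem then yields \eqref{eq:pi_to_erp}.

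I expect the only genuine subtlety to be the first, bookkeeping step: one must verify that the partition defined through the parameters $\phi_i$ really reduces to the partition defined through the allocation labels $c_i$, which is exactly where the almost-sure distinctness of the atoms $\theta_k$ (and hence the hypothesis that $\pi$ is non-atomic) enters; without it the induced law on $\prt_n$ would depend on $\pi$ and not only on $\nu$, and the right-hand side of \eqref{eq:pi_to_erp} would undercount partitions realisable by allocating one block to several coincident atoms. The remaining steps are the elementary disjoint decomposition of $\{\cI_c=\cI\}$ and the application of Tonelli to pass the sum through the integral.
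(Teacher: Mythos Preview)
Your argument is correct and follows essentially the same route as the paper's proof: condition on $\bp$, identify the event $\{\cI_c=\cI\}$ as a disjoint union over injections $\psi\colon\cI\to[m]$, compute each term as $\prod_{I\in\cI}p_{\psi(I)}^{|I|}$, sum, and integrate out $\bp$. You are in fact more careful on two points the paper leaves implicit: you make explicit that non-atomicity of $\pi$ is what guarantees the atoms $\theta_k$ are a.s.\ distinct (so that the $\phi$-partition coincides with the label partition and \eqref{eq:pi_to_erp} depends only on $\nu$), and you invoke Tonelli to justify exchanging the possibly infinite sum with the integral when $m=\infty$.
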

\begin{proof}
If $|\cI|>m$ then both sides of \eqref{eq:pi_to_erp} are 0. We
now assume that $|\cI|\leq m$. Let us go back to \eqref{eq:bmm2} and 
suppose that the weights $\bm{p}=(p_i)_{i=1}^m$ and the atoms
$\bm{\theta}=(\theta_i)_{i=1}^m$ are fixed. We need to know what is the probability
that $\bm{\phi}=(\phi_1,\ldots,\phi_n) \cond \bm{p},\bm{\theta}\iid \sum_{i=1}^m p_i
\delta_{\theta_i}$ induces a partition $\cI$. This would mean that for every
$I\in\cI$ all the values $\phi_i$ for $i\in I$ are equal to $\theta_j$ for some
$j\leq m$; let $j=\psi(I)$. The values $\psi(I)$ must be different for different
$I\in\cI$, otherwise $\cI$ would not be generated. The probability of the
sequence $(\phi_1,\ldots,\phi_n)$ where $\phi_i=\theta_{\psi(I)}$ for $i\in I$
is equal to $\prod_{I\in\cI} p_{\psi(I)}^{|I|}$. Since any assignment of
clusters to atoms is valid, so for fixed $\bm{p}$ the probability of $\cI$ is
equal to $\sum_{\psi\colon \cI\stackrel{1-1}{\to} [m]} \prod_{I\in\cI}
p_{\psi(I)}^{|I|}$. Since $\bm{p}\sim \nu$ is random, we have to integrate it
out and \eqref{eq:pi_to_erp} follows.
\end{proof}

Let $\cP_{ \nu, n }$ be the probability distribution on the space of
partitions generated by $\nu$. We can formulate \eqref{eq:bmm1} as follows:
firstly we generate the partition of observations into clusters, and then for
every cluster we sample actual observations from the relevant marginal
distribution. Formally, \eqref{eq:bmm1} is equivalent to
\begin{equation}\label{eq:bmm3}
\begin{array}{rcl}
\cI&\sim& \cP_{ \nu, n }\\
\bx_I:=(x_i)_{i\in I}\cond \cI &\sim& f_{ |I| } \quad\textrm{ independently for
all $I\in\cI$}
\end{array}
\end{equation}
where for $\theta\sim \pi$, $k\in\N$ and $\buvar=(\uvar_1,\ldots,\uvar_k)\cond
\theta\iid G_\theta$, $f_k$ is the marginal density of $\buvar$, i.e. 
\begin{equation}
f_{ k }(\uvar_1,\ldots,\uvar_k):=\int_\Theta \pi(\theta)\prod_{i=1}^k g_\theta(\uvar_i)\d{\theta}.
\end{equation}
($g_\theta$ is the density of $G_\theta$). We stress the fact that the independent sampling on the `lower' level of
\eqref{eq:bmm3} relates
to the independence between clusters (conditioned on the random partition); within one cluster the
observations are (marginally) dependent. To make the notation more concise we
define
\begin{equation}
f(\bx\cond \cI):= \prod_{I\in\cI} f_{|I|}(\bx_I).
\end{equation}
Then
\eqref{eq:bmm3} becomes
\begin{equation}\label{eq:bmm3a}
\begin{array}{rcl}
\cI&\sim& \cP_{ \nu, n }\\
\bx\cond \cI &\sim& f(\cdot\cond \cI).
\end{array}
\end{equation}

The further analysis requires the exact formula for $f_k$; in our case it is
straightforward to compute since $\pi$ and $G_\theta$ are conjugate. We state
the result here for the reader's convenience.

\begin{prop}
\label{res:form_f_MC}
Let $\theta=(\mu,\Lambda)$ have the distribution given by \eqref{eq:NIWmodel} and let
$\buvar=(\uvar_1,\ldots,\uvar_k)\cond \theta\iid \cN(\mu,\Lambda)$. Then the marginal
distribution of $\buvar$ is given by
\begin{equation}\label{eq:fMC}
f_{k}(\buvar)
=
\frac{|\eta_0\Sigma_0|^{\nu_0/2}\kappa_0^{1/2}\Gamma_d\big({ \nu_\Ik\over 2 }\big)}
{\pi^{d\Ik/2}\kappa_\Ik^{1/2}\Gamma_d\big({ \nu_0\over 2 }\big)}\cdot
\det\left(\Sigma(\buvar)\right)^{-\nu_\Ik/2},
\end{equation}
where $\Gamma_d$ is
the multivariate Gamma function and
\begin{equation}
\nu_\Ik=\eta_0+d+1+\Ik,\ \kappa_\Ik=\kappa_0+\Ik\quad\textrm{and}
\end{equation}
\vspace*{-5mm}
\begin{equation}\label{eq:sigmafun}
\Sigma(\buvar)=
\eta_0\Sigma_0+\sum_{i=1}^k (\uvar_i-\ov{\buvar})(\uvar_i-\ov{\buvar_I})^t+
\frac{\kappa_0 \Ik}{\kappa_{\Ik}}(\ov{\buvar}-\mu_0)(\ov{\buvar}-\mu_0)^t.
\end{equation}
\end{prop}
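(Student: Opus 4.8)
The plan is to obtain $f_k$ by integrating the parameters out of the joint density, exploiting the conjugacy of the Normal-inverse-Wishart prior with the Gaussian likelihood. Writing $p(\mu\cond\Lambda)$ and $p(\Lambda)$ for the two factors of the prior \eqref{eq:NIWmodel} and $g_{(\mu,\Lambda)}(\uvar)=(2\pi)^{-d/2}|\Lambda|^{-1/2}\exp\big(-\re{2}(\uvar-\mu)^t\Lambda^{-1}(\uvar-\mu)\big)$ for the Gaussian density, we have
\begin{equation}
f_k(\buvar)=\int_{\cS^+_d}p(\Lambda)\Big(\int_{\R^d}p(\mu\cond\Lambda)\prod_{i=1}^k g_{(\mu,\Lambda)}(\uvar_i)\,\d{\mu}\Big)\d{\Lambda}.
\end{equation}
First I would perform the inner integral over $\mu$, then recognise the outer integrand as an unnormalised inverse-Wishart density and integrate over $\Lambda$ using its normalising constant.

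For the $\mu$-integral, the key algebraic step is the analysis-of-variance identity $\sum_{i=1}^k(\uvar_i-\mu)(\uvar_i-\mu)^t=\sum_{i=1}^k(\uvar_i-\ov{\buvar})(\uvar_i-\ov{\buvar})^t+k(\ov{\buvar}-\mu)(\ov{\buvar}-\mu)^t$, which separates the data scatter from the contribution of $\mu$. The exponent then contains two quadratic forms in $\mu$ weighted by $\Lambda^{-1}$: one of weight $k$ centred at $\ov{\buvar}$ coming from the likelihood and one of weight $\kappa_0$ centred at $\mu_0$ coming from $p(\mu\cond\Lambda)$. Completing the square pools these into a single Gaussian in $\mu$ of precision $\kappa_k\Lambda^{-1}$ centred at $(k\ov{\buvar}+\kappa_0\mu_0)/\kappa_k$, together with a $\mu$-free residual. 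Carrying out the Gaussian integral contributes a factor proportional to $|\Lambda/\kappa_k|^{1/2}$ and leaves precisely the cross term $\frac{\kappa_0 k}{\kappa_k}(\ov{\buvar}-\mu_0)(\ov{\buvar}-\mu_0)^t$; collecting this with $\eta_0\Sigma_0$ and the data scatter reproduces the matrix $\Sigma(\buvar)$ of \eqref{eq:sigmafun}.

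After the $\mu$-integration the remaining $\Lambda$-integrand is, up to constants, $|\Lambda|^{-(\nu_k+d+1)/2}\exp\big(-\re{2}\mathrm{tr}\,\Sigma(\buvar)\Lambda^{-1}\big)$, the exponent arising because the $|\Lambda|^{1/2}$ produced by the $\mu$-integral cancels the $|\Lambda|^{-1/2}$ of $p(\mu\cond\Lambda)$, leaving the powers $-k/2$ from the likelihood and $-(\nu_0+d+1)/2$ from $p(\Lambda)$, where $\nu_0=\eta_0+d+1$, so that the total exponent is $-(\nu_k+d+1)/2$ with $\nu_k=\nu_0+k$. This is the kernel of an inverse-Wishart law $\Wishart^{-1}(\nu_k,\Sigma(\buvar))$, so integrating over $\cS^+_d$ returns the reciprocal of its normaliser, namely $2^{\nu_k d/2}\Gamma_d(\nu_k/2)\det(\Sigma(\buvar))^{-\nu_k/2}$; this supplies both the factor $\Gamma_d(\nu_k/2)$ and the determinant power in \eqref{eq:fMC}. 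The final step is bookkeeping: one collects the $(2\pi)$-factors from the $k$ likelihood terms and the two Gaussian normalisations, the $\kappa$-factors from the $\mu$-prior and the $\mu$-integral, and the factor $|\eta_0\Sigma_0|^{\nu_0/2}/\Gamma_d(\nu_0/2)$ from the prior normaliser, and verifies that the powers of $2$ cancel, the powers of $\pi$ combine to $\pi^{-dk/2}$, and the remaining factors assemble into the stated constant. I expect no conceptual difficulty; the one delicate point is the completion of the square in $\mu$, where the coefficient $\kappa_0 k/\kappa_k$ and the exact recombination of the $\mu_0$- and $\ov{\buvar}$-quadratic forms must be tracked to confirm that no residual terms survive.
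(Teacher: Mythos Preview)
Your proposal is correct and follows the standard conjugate-prior computation. The paper's own ``proof'' is merely a one-line citation to equation (266) of Murphy's conjugate-prior notes, so your outline in fact supplies considerably more detail than the paper does; the route you describe (integrate out $\mu$ by completing the square, then recognise an inverse-Wishart kernel in $\Lambda$) is precisely the derivation underlying that reference.
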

\begin{proof}
The proof follows from \cite{bib:Murphy2007conjugate}, equation (266).
\end{proof}

\subsection{The Induced Partition}
Throughout this section $P$ is some fixed probability distribution on $\R^d$. 

\begin{dfn}
We say that a family $\cA$ of
$P$-measurable subsets of $\R^d$ is a \emph{$P$-partition} if
\begin{itemize}
\item $P\left(\bigcup_{A\in\cA} A\right)=1$
\item $P(A_1\cap A_2)=0$ for all $A_1,A_2\in\cA$, $A_1\neq A_2$.
\end{itemize}
\end{dfn}
\begin{ntn*}
Let $\cA$ be a $P$-partition of the observation space. 
Let $X_1,X_2,\ldots\iid P$ and for $n\in\N$ let $\cI^\cA_n=\{J^A_n \colon
A\in\cA\}$ where $J^A_n=\{i\leq n\colon X_i\in A\}$ (if $J^A_n=\emptyset$, we do
not include it in $\cI^\cA_n$). We say that $\cI^\cA_n$ is \emph{induced} by
$\cA$.
\end{ntn*}

\begin{prop}
Let $\cA$ be a $P$-partition of the observation space.
Then $\cI^\cA_n$ is almost surely a partition of $[n]$.
\end{prop}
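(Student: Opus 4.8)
The plan is to reduce the assertion to a statement about individual observations: the collection $\cI^\cA_n$ is a partition of $[n]$ precisely when every index $i\le n$ lies in exactly one of the blocks $J^A_n$. Since $i\in J^A_n$ iff $X_i\in A$, this amounts to requiring that each of $X_1,\ldots,X_n$ falls into exactly one member of $\cA$. Accordingly I would split the ways this can fail into two events: an index $i$ is \emph{uncovered} if $X_i\notin U:=\bigcup_{A\in\cA}A$, and it is \emph{doubly covered} if $X_i\in A_1\cap A_2$ for some distinct $A_1,A_2\in\cA$. The empty blocks are discarded by the stated convention, so it remains only to rule out these two failure modes almost surely.

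First I would treat covering. By the first axiom of a $P$-partition, $P(U)=1$, so $P(X_i\notin U)=P(\R^d\setminus U)=0$ for each fixed $i$. For disjointness I would introduce the overlap set $N:=\bigcup_{A_1\neq A_2}(A_1\cap A_2)$; the second axiom gives $P(A_1\cap A_2)=0$ for every pair, and since $\cA$ is countable, $N$ is a countable union of $P$-null sets, whence $P(N)=0$. Setting $B:=(\R^d\setminus U)\cup N$ we have $P(B)=0$, and on the complement of $B$ each point of $\R^d$ belongs to exactly one block of $\cA$ (it lies in at least one, being in $U$, and in at most one, being outside $N$). The event that $\cI^\cA_n$ fails to be a partition is contained in $\{\exists\, i\le n:\ X_i\in B\}$; since $X_i\sim P$, each set $\{X_i\in B\}$ has probability $P(B)=0$, so by subadditivity this event has probability at most $\sum_{i=1}^n P(B)=0$. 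Hence almost surely every $i\le n$ lies in exactly one $J^A_n$, the nonempty blocks are pairwise disjoint and cover $[n]$, and $\cI^\cA_n$ is a partition.

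The only delicate point is the nullity, and even the measurability, of the overlap set $N$, which is where countability of $\cA$ is essential: for an uncountable family the union defining $N$ need not be measurable, and pairwise nullity would not force $P(N)=0$. In the present setting this is not an obstacle, since the $P$-partitions under consideration arise from clusterings and are countable (and the induced partitions have at most $n$ nonempty blocks in any case). If one wished to drop countability one would have to control the positive-measure part of $\cA$ separately, but for the statement as posed the countable case suffices and the argument above applies directly.
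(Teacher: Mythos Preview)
Your argument is correct and complete. The paper itself omits the proof entirely (``straightforward and therefore omitted''), so you have supplied what the paper did not; your reduction to the null set $B=(\R^d\setminus U)\cup N$ and the finite union bound over $i\le n$ is exactly the right elementary verification, and your caveat about countability of $\cA$ is accurate and appropriately flagged---in the paper every subsequent use of $\cI^\cA_n$ takes $\cA$ finite, so the point is moot in context.
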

\begin{proof}
The proof is straightforward and therefore omitted.
\end{proof}

\smallskip
Let 
$E_P(A)=\E_P(X\cond X\in A)$ and $\bV_P(A)=\Var_P(X\cond X\in A)$, where $X\sim P$.
That means $E_P(A)$ is the conditional expected value and $\bV_P(A)$ is the conditional
covariance matrix of $X$ conditioned on the event $X\in A$.
For a family $\cA$ of sets with positive $P$ measure let
\begin{equation}
\cV_P(\cA)=\sum_{A\in\cA} P(A)\ln|\bV_P(A)|,\qquad
\cH_P(\cA)=-\sum_{A\in\cA} P(A)\ln P(A),
\end{equation}
where $|\cdot|$ means determinant. Let 
\begin{equation}\label{eq:deltadef}
\nDelta_P(\cA)=-\re{2}\cV_P(\cA)-\cH_P(\cA)
\end{equation}
It turns out that basically \eqref{eq:deltadef} is (modulo constant) the first order approximation to the logarithm
of the posterior probability in Bayesian Mixture Model of the data clustering defined by $\cA$, when
the data comes as an iid sample from $P$. 
\begin{prop}\label{res:allapprox}
$\sqrt[n]{\cP_{\nu,n}(\cI^\cA_n)\cdot f(X_{1:n}\cond \cI^\cA_n)}\approx
n\exp\{\nDelta_P(\cA)\}$, where
\begin{equation}\label{eq:mydelta}
\nDelta_P(\cA)
=
-\re{2}\sum_{A\in\cA} P(A)\ln |\bV_P(A)| + \sum_{A\in \cA} P(A)\ln P(A)
\end{equation}
\end{prop}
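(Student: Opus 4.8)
The plan is to identify $\nDelta_P(\cA)$ as the almost-sure exponential rate of the quantity inside the root, i.e.\ to show that
\begin{equation*}
\re{n}\ln\!\big(\cP_{\nu,n}(\cI^\cA_n)\cdot f(X_{1:n}\cond\cI^\cA_n)\big)\;\longrightarrow\;\nDelta_P(\cA)
\end{equation*}
almost surely, up to an additive constant that does not depend on $\cA$; since a score function is only used to compare partitions, such a constant---together with the sub-exponential prefactor written as $n$ in the display---is immaterial, and this is the sense in which I read ``$\approx$''. The only probabilistic input is the strong law of large numbers: almost surely, for every $A\in\cA$ the induced cluster $J^A_n$ satisfies $|J^A_n|/n\to P(A)$, its empirical mean $\ov{X_{J^A_n}}\to E_P(A)$, and its within-cluster empirical covariance converges to $\bV_P(A)$. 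Because $\cA$ is a fixed finite $P$-partition with $P(A)>0$ for each part, every $|J^A_n|$ grows linearly in $n$, so all remainder terms below are controlled uniformly over the finitely many clusters.

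First I would treat the likelihood factor. Using $f(X_{1:n}\cond\cI^\cA_n)=\prod_{A\in\cA}f_{|J^A_n|}(X_{J^A_n})$ and inserting the closed form of \Cref{res:form_f_MC}, the logarithm of each factor is, writing $k=|J^A_n|$,
\begin{equation*}
\ln f_k(X_{J^A_n})=\ln\Gamma_d\!\Big(\frac{\nu_k}{2}\Big)-\frac{dk}{2}\ln\pi-\frac{\nu_k}{2}\ln\det\Sigma(X_{J^A_n})+O(\ln k).
\end{equation*}
The decisive observation is that $\Sigma(X_{J^A_n})/k\to\bV_P(A)$: the regulariser $\eta_0\Sigma_0$ and the rank-one correction $\frac{\kappa_0 k}{\kappa_k}(\ov{X_{J^A_n}}-\mu_0)(\ov{X_{J^A_n}}-\mu_0)^t$ are $O(1)$, while the empirical scatter is of order $k$; hence $\ln\det\Sigma(X_{J^A_n})=d\ln k+\ln\det\bV_P(A)+o(1)$. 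Combining this with Stirling's expansion $\ln\Gamma_d(\nu_k/2)=\frac{dk}{2}\ln k-\frac{dk}{2}(1+\ln 2)+O(\ln k)$, the two leading $\frac{dk}{2}\ln k$ terms cancel, leaving $-\frac{k}{2}\ln\det\bV_P(A)$ plus a term proportional to $k$ with an $\cA$-independent coefficient. Summing over $A$, dividing by $n$ and using $k/n\to P(A)$ gives $\re{n}\ln f(X_{1:n}\cond\cI^\cA_n)\to-\re{2}\sum_{A\in\cA}P(A)\ln|\bV_P(A)|$ up to an $\cA$-independent constant, which is precisely the variance part $-\re{2}\cV_P(\cA)$ of $\nDelta_P(\cA)$.

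Next I would treat the partition probability. In the Chinese Restaurant case it is explicit: from $\cP_{\nu,n}(\cI)=\alpha^{|\cI|}/\alpha^{(n)}\prod_{I\in\cI}(|I|-1)!$ and Stirling, the $n\ln n$ contributions of $\ln\alpha^{(n)}$ and of $\sum_I\ln(|I|-1)!$ cancel and one obtains $\re{n}\ln\cP_{\nu,n}(\cI^\cA_n)\to\sum_{A\in\cA}P(A)\ln P(A)=-\cH_P(\cA)$. For general $\nu$ I would start from the integral representation \eqref{eq:pi_to_erp}: the integrand $\prod_{I\in\cI}p_{\psi(I)}^{|I|}$ is maximised over the simplex at $p_{\psi(I)}=|I|/n$, where its value is $\exp\{\sum_I|I|\ln(|I|/n)\}$, whose rate is again $\sum_{A}P(A)\ln P(A)$; a Laplace argument then transfers this rate to $\re{n}\ln\cP_{\nu,n}(\cI^\cA_n)$, the polynomial Laplace prefactor and the finitely many injections $\psi$ contributing only sub-exponential factors. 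Adding the two limits yields $-\re{2}\cV_P(\cA)-\cH_P(\cA)=\nDelta_P(\cA)$ up to an $\cA$-independent constant, which is the claim.

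I expect the partition-probability term for a general base measure $\nu$ to be the main obstacle. Whereas the likelihood asymptotics are essentially forced by \Cref{res:form_f_MC} and the law of large numbers, the behaviour of $\cP_{\nu,n}$ depends on how $\nu$ concentrates on the simplex, and making the Laplace step rigorous---controlling the sum over injections and the boundary of the simplex when some $P(A)$ are small, and verifying that $\nu$ charges a neighbourhood of $(P(A))_{A\in\cA}$---requires genuine hypotheses on $\nu$. A secondary, book-keeping, difficulty is to carry the Stirling remainders for the multivariate Gamma function precisely enough to confirm both the cancellation of the $\frac{d}{2}k\ln k$ terms and that every surviving sub-linear term is independent of $\cA$, so that it cannot affect the ranking of partitions induced by $\nDelta_P$.
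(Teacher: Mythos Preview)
Your decomposition into a likelihood factor and a prior factor, with the law of large numbers driving $|J^A_n|/n\to P(A)$ and the empirical covariances, is exactly the paper's structure: the paper simply invokes \Cref{res:likapprox} and \Cref{res:priorapprox}. Your treatment of the likelihood---Stirling for $\Gamma_d$, the convergence $\Sigma(X_{J^A_n})/k\to\bV_P(A)$, and the cancellation of the $\frac{dk}{2}\ln k$ terms---mirrors the proof of \Cref{res:likapprox} essentially line for line.

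The one substantive divergence is in the prior term for general $\nu$. You propose a Laplace-type argument and correctly flag that making it rigorous would seem to require hypotheses on $\nu$ (support near the maximiser, control of the sum over injections, boundary behaviour). The paper sidesteps all of this via \Cref{res:huniv}: it first pushes the sum over injections $\psi$ into a single $\sigma$-finite measure $\nu_K$ on the $K$-simplex, writes $\sqrt[n]{\cP_{\nu,n}(\cI_n)}=\norm{g_n}_n$ in $L^n(\tri^K,\nu_K)$ with $g_n(\bp)=\prod_k p_k^{|I_{k,n}|/n}$, and then uses the purely measure-theoretic fact $\norm{g}_n\to\norm{g}_\infty=\prod_k\alpha_k^{\alpha_k}$ together with $\norm{g_n-g}_n\to 0$ (obtained by dominating $g^{n/2}$ by the integrable function $h(\bp)=\prod_k p_k$). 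This $L^n$-norm route delivers the entropy limit for \emph{every} $\nu$ on $\Delta^\infty$, with no smoothness, density, or support assumptions; it is precisely the device that dissolves the obstacle you identified. Your Laplace heuristic points at the same limit but would, as you suspected, need extra conditions to be made into a proof.
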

\begin{proof}
The result follows from \Cref{res:likapprox} and \Cref{res:priorapprox}.
\end{proof}

It should be noted that \Cref{res:allapprox} does not depend on the form of the
prior on probability measures. This prior is responsible for the `entropy` part
of \eqref{eq:mydelta}.

\smallskip
The final goal is not to score the partitions of the observation space but
clusterings of the data. A natural idea is to replace the distribution
$P$ in \eqref{eq:deltadef} by its empirical counterpart. 
Let $\hat{P}_n=\re{n}\sum_{i\leq n} \delta_{x_i}$ be the empirical probability
of $\bx$. This is how $\cD_0$ is obtained.

\smallskip
The function $\cD_0$ would not be a good score
statistic, because if $\cJ$ contains a cluster $J$ of size less than $d$ then  
$\sum_{j\in J}(x_j-\ov{x_J})(x_j-\ov{x_J})^t$ is singular and hence
$\eDelta_\bx(\cJ)=\infty$. To circumvent this, one could add some positive
definite matrix to the within-group covariance matrix -- in this way the
relevant determinant will always be greater than zero. Since we
would like to avoid any arbitrary constants in the score function, a natural
idea is to use the covariance matrix of the whole dataset,
$\hat{\bV}_\bx=\sum_{i\leq n}(x_i-\ov{x})(x_i-\ov{x})^t$. This operation is
also motivated by considering \textit{the adaptive model}, where the strength of
prior distribution is increasing linearly with the number of observations. The
details of this approach are given in \Cref{sec:adaptive}. On the other hand, we do not want this modification to
affect $\eDelta_\bx$ significantly when the sizes of clusters are large and the empirical
covariance matrices are good estimates of theoretical ones. Therefore we decide
to decrease the importance of the modification linearly with the cluster size.
This gives \eqref{eq:timon}, which is a well defined score statistic.

\subsubsection{Auxiliary propositions}
\begin{prop}\label{res:likapprox}
Let $P$ be a probability distribution on $\R^d$ and let $\cA$ be
a \emph{finite} $P$-partition of the observation space.
Then $\lim_{n\to\infty} \sqrt[n]{f(X_{1:n}\cond \cI^\cA_n)}\ase \prod_{A\in\cA} |\bV_P(A)|^{P(A)} $
\end{prop}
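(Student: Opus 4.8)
The plan is to pass to logarithms and exploit the product structure $f(X_{1:n}\cond\cI^\cA_n)=\prod_{A\in\cA}f_{n_A}\big((X_i)_{i\in J^A_n}\big)$, where $n_A:=|J^A_n|$. Taking the $n$-th root and the logarithm reduces the claim to showing
\begin{equation*}
\lim_{n\to\infty}\frac1n\sum_{A\in\cA}\ln f_{n_A}\big((X_i)_{i\in J^A_n}\big)\ase -\frac12\sum_{A\in\cA}P(A)\ln|\bV_P(A)|+C,
\end{equation*}
for a constant $C$ depending only on $d$; exponentiating then yields the asserted product (up to the universal factor $e^{C}$, which is exactly the variance contribution $\exp\{-\frac12\sum_A P(A)\ln|\bV_P(A)|\}$ feeding \Cref{res:allapprox}). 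For each cluster I substitute the closed form \eqref{eq:fMC} from \Cref{res:form_f_MC} and analyse the $n_A$-dependent blocks separately: the $n_A$-independent prefactor, which is $O(1)$; the multivariate Gamma factor $\Gamma_d(\nu_{n_A}/2)$; the powers $\pi^{-dn_A/2}\kappa_{n_A}^{-1/2}$; and the determinant term $\det(\Sigma((X_i)_{i\in J^A_n}))^{-\nu_{n_A}/2}$.

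The probabilistic input is the strong law of large numbers, applied at two levels. First, since $\cA$ is finite and each block has $P(A)>0$, the indicator averages give $n_A/n\ase P(A)$, so in particular $n_A\ase\infty$ for every $A$. Second, conditionally on falling in $A$ the points $X_i$, $i\in J^A_n$, are i.i.d. from $P(\cdot\mid A)$, so their empirical mean and empirical covariance matrix converge almost surely to $E_P(A)$ and $\bV_P(A)$. Inspecting \eqref{eq:sigmafun}, the fixed matrix $\eta_0\Sigma_0$ and the rank-one term $\tfrac{\kappa_0 n_A}{\kappa_{n_A}}(\ov{X_{J^A_n}}-\mu_0)(\ov{X_{J^A_n}}-\mu_0)^t$ are both $O(1)$ relative to the scatter matrix $\sum_{i\in J^A_n}(X_i-\ov{X_{J^A_n}})(X_i-\ov{X_{J^A_n}})^t$, which is of order $n_A$; hence $\frac1{n_A}\Sigma((X_i)_{i\in J^A_n})\ase\bV_P(A)$ and therefore $\ln\det\Sigma=d\ln n_A+\ln|\bV_P(A)|+o(1)$ almost surely.

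The analytic heart is Stirling's formula. Writing $\Gamma_d(a)=\pi^{d(d-1)/4}\prod_{j=1}^d\Gamma(a-\tfrac{j-1}{2})$ and expanding each factor at $a=\nu_{n_A}/2\sim n_A/2$ gives $\ln\Gamma_d(\nu_{n_A}/2)=\tfrac{d n_A}{2}\ln n_A-\tfrac{d n_A}{2}(1+\ln 2)+O(\ln n_A)$. The decisive observation is that the $\tfrac{d n_A}{2}\ln n_A$ produced here is cancelled exactly by the $-\tfrac{\nu_{n_A}}{2}\cdot d\ln n_A$ coming from the determinant block, so that after also accounting for $-\tfrac{d n_A}{2}\ln\pi$ only terms linear in $n_A$ survive, namely $-\tfrac{d n_A}{2}\big(1+\ln(2\pi)\big)-\tfrac{n_A}{2}\ln|\bV_P(A)|+O(\ln n_A)$. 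Dividing by $n$, summing over the finitely many $A$, and using $n_A/n\ase P(A)$ together with $\sum_{A}P(A)=1$ gives the displayed limit with $C=-\tfrac{d}{2}(1+\ln(2\pi))$, and exponentiation completes the argument.

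The main obstacle is bookkeeping the error terms rigorously for \emph{random} cluster sizes: one must verify that the Stirling remainders and the $o(1)$ determinant error are genuinely $o(n_A)$ and hence vanish after division by $n$, which is precisely where the almost-sure lower bound $n_A\ase\infty$ is essential. A secondary point that should be recorded as a hypothesis (or handled separately) is nondegeneracy: if $P(\cdot\mid A)$ is supported on a lower-dimensional affine subspace then $|\bV_P(A)|=0$, both sides collapse to the boundary value, and the limit must be read in $\ov{\R}$.
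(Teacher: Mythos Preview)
Your argument is correct and follows the same overall strategy as the paper: factor $f(X_{1:n}\cond\cI^\cA_n)$ over clusters, insert the closed form \eqref{eq:fMC}, handle $\Gamma_d$ via Stirling, and replace the empirical scatter by $\bV_P(A)$ using the strong law. The difference lies in the bookkeeping. You work additively with $\tfrac1n\ln(\cdot)$ and exhibit the cancellation of the $n_A\ln n_A$ contributions symbolically \emph{before} passing to the limit; after that, only $n_A/n\ase P(A)$ and $\tfrac1{n_A}\Sigma\ase\bV_P(A)$ are required, so the ordinary SLLN suffices. The paper instead packages the same asymptotics multiplicatively through \Cref{res:viper} and \Cref{lem:gammad}; those lemmas carry a rate hypothesis of the form $b_n-\beta=o(n^{-b})$, which is precisely why the paper invokes the Law of the Iterated Logarithm for $|J^A_n|/n-P(A)$. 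Your route is therefore slightly more elementary, while the paper's auxiliary lemmas are reusable (they reappear in \Cref{sec:adaptive}). As a side remark, your constant $C=-\tfrac{d}{2}\big(1+\ln(2\pi)\big)$ correctly incorporates the $\pi^{-dk/2}$ factor in \eqref{eq:fMC}, which the paper's displayed computation omits, leading it to $(2e)^{-d/2}$ instead of $(2\pi e)^{-d/2}$; this has no effect on the score function.
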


\smallskip
Before we present the proof of \Cref{res:likapprox}, we formulate an auxiliary
lemma that concerns the asymptotics of the function $\Gamma_d$.
\begin{ntn*}
If $(a_n)_{n=1}^\infty$ and $(b_n)_{n=1}^\infty$ are real sequences, we
write $a_n\approx b_n$ if $\lim_{n\to\infty} \frac{a_n}{b_n}=1$. 
We write $a_n=o(b_n)$ if $\lim_{n\to\infty} \frac{a_n}{b_n}=0$. Similarly, if $a,b\colon \R\to\R$ are real functions, we write $a(x)\approx
b(x)$ if $\lim_{x\to\infty} \frac{a(x)}{b(x)}=1$ and $a(x)=o\big(b(x)\big)$ if $\lim_{x\to\infty} \frac{a(x)}{b(x)}=0$.
\end{ntn*}
\begin{lem}\label{res:viper}
Let $\alpha,\beta,a,b>0$. If $a_n\approx \alpha n^a$ and $b_n-\beta=o\left(\re{n^b}\right)$ then $a_n^{b_n}\approx (\alpha n)^\beta$.
\end{lem}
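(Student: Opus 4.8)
The plan is to take logarithms and turn the multiplicative statement into an additive one. Set $\ell_n=\ln(a_n^{b_n})=b_n\ln a_n$; since $\alpha,n^a>0$, the hypothesis $a_n\approx\alpha n^a$ forces $a_n>0$ for large $n$, so $\ln a_n$ is eventually well defined. Because $\exp$ is continuous and $e^{o(1)}\to 1$, it suffices to prove that $\ell_n-\ln\big((\alpha n^a)^\beta\big)=\ell_n-\beta(\ln\alpha+a\ln n)\to 0$; exponentiating then yields $a_n^{b_n}\approx(\alpha n^a)^\beta$.

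First I would linearise each factor. From $a_n\approx\alpha n^a$ we get $\ln a_n=a\ln n+\ln\alpha+\varepsilon_n$, where $\varepsilon_n=\ln\!\big(a_n/(\alpha n^a)\big)\to 0$. Writing $\delta_n=b_n-\beta$, the hypothesis gives $\delta_n=o(n^{-b})$, and in particular $\delta_n\to 0$. Substituting, $\ell_n=(\beta+\delta_n)(a\ln n+\ln\alpha+\varepsilon_n)$, and expanding I would collect the main term $\beta(a\ln n+\ln\alpha)$ together with the remainder $R_n=\beta\varepsilon_n+\delta_n\ln\alpha+\delta_n\varepsilon_n+a\,\delta_n\ln n$.

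The main obstacle is the single term $a\,\delta_n\ln n$. The other three summands of $R_n$ tend to $0$ immediately because $\delta_n\to 0$ and $\varepsilon_n\to 0$ while $\beta,\ln\alpha$ are constants; but in $a\,\delta_n\ln n$ a factor growing like $\ln n$ multiplies $\delta_n$, so mere convergence $b_n\to\beta$ would not be enough. This is precisely what the quantitative rate is for: writing $\delta_n\ln n=(n^b\delta_n)\cdot(\ln n/n^b)$, the first factor tends to $0$ by hypothesis, while $\ln n/n^b\to 0$ for every $b>0$, so the product tends to $0$. Hence $R_n\to 0$, giving $\ell_n=\beta(a\ln n+\ln\alpha)+o(1)$ and, after exponentiating, $a_n^{b_n}\approx\alpha^\beta n^{a\beta}=(\alpha n^a)^\beta$ (which reduces to the stated $(\alpha n)^\beta$ in the case $a=1$ used in \Cref{res:likapprox}).
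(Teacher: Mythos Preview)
Your argument is correct and is essentially the logarithmic version of the paper's proof: the paper factors $a_n^{b_n}=\big(a_n/(\alpha n^a)\big)^\beta\cdot a_n^{b_n-\beta}$ and sandwiches $a_n^{b_n-\beta}$ between $(2\alpha n^a)^{\pm 1/n^b}\to 1$, which after taking logarithms is precisely your critical term $a\,\delta_n\ln n\to 0$. You are also right that the computation actually yields $(\alpha n^a)^\beta$ rather than the stated $(\alpha n)^\beta$; the paper's own final identity gives $a_n^{b_n}/(\alpha n^a)^\beta\to 1$ as well, and only the case $a=1$ is invoked later.
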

\begin{proof}
For sufficiently large $n$ we have $1<a_n<2\alpha n^a$ and $-\re{n^b}<b_n-\beta<\re{n^c}$, hence
\begin{equation}\label{eq:banana}
(2\alpha
n^a)^{-\re{n^b}}<a_n^{-\re{n^b}}<a_n^{b_n-\beta}<a_n^{\re{n^b}}<(2\alpha n^a)^{\re{n^c}}
\end{equation}
Left- and right-hand side of \eqref{eq:banana} converge to 1, so
$\lim_{n\to\infty} a_n^{b_n-\beta}=1$. The proof follows from $
\frac{a_n^{b_n}}{(\alpha n)^\beta}= \left(\frac{a_n}{\alpha n^a}\right)^\beta
a_n^{b_n-\beta} $.

\end{proof}
\begin{lem}\label{lem:gammad}
If $x_n\approx \lambda n$ and $x_n/n-\lambda  = o\big(\re{n^a}\big)$ for some
$a>0$ then
$\sqrt[n]{\Gamma_d\left( { x_n  }\right)}\approx 
(\lambda \frac{n}{e})^{\lambda d}$.
\end{lem}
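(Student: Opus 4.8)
The plan is to reduce $\Gamma_d$ to a product of ordinary Gamma functions and then apply Stirling's formula together with \Cref{res:viper} to each factor. Using the standard identity $\Gamma_d(x)=\pi^{d(d-1)/4}\prod_{j=1}^{d}\Gamma\big(x-(j-1)/2\big)$ and taking $n$-th roots,
\[
\sqrt[n]{\Gamma_d(x_n)}=\pi^{d(d-1)/(4n)}\prod_{j=1}^{d}\sqrt[n]{\Gamma\big(x_n-(j-1)/2\big)}.
\]
Since $\pi^{d(d-1)/(4n)}\to 1$ and the product ranges over the fixed finite set $\{1,\dots,d\}$, it suffices to show that each factor obeys $\sqrt[n]{\Gamma(y_n)}\approx(\lambda n/e)^{\lambda}$ for $y_n=x_n-(j-1)/2$; the conclusion then follows by multiplying $d$ factors whose ratios to $(\lambda n/e)^{\lambda}$ all tend to $1$.

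For a fixed $j$ I would first note that the hypotheses pass to $y_n$: clearly $y_n\approx\lambda n$, and $y_n/n-\lambda=(x_n/n-\lambda)-(j-1)/(2n)=o(1/n^{b})$ with $b=\min(a,1)$, because subtracting the constant $(j-1)/2$ perturbs $y_n/n$ only by $O(1/n)$. Applying Stirling in the form $\Gamma(y_n)\approx\sqrt{2\pi}\,y_n^{\,y_n-1/2}e^{-y_n}$ and taking $n$-th roots, the constant prefactor contributes $(2\pi)^{1/(2n)}\to 1$ and the exponential contributes $e^{-y_n/n}\to e^{-\lambda}$, leaving the single factor $y_n^{(y_n-1/2)/n}$ to analyse.

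To this factor I would apply \Cref{res:viper} with $a_n=y_n$ (so $\alpha=\lambda$ and the power is $a=1$) and $b_n=(y_n-1/2)/n$, $\beta=\lambda$. The required rate $b_n-\beta=(y_n/n-\lambda)-1/(2n)=o(1/n^{b})$ is exactly what the previous step supplies, and \Cref{res:viper} then gives $y_n^{(y_n-1/2)/n}\approx(\lambda n)^{\lambda}$. Combining the three contributions yields $\sqrt[n]{\Gamma(y_n)}\approx(\lambda n)^{\lambda}e^{-\lambda}=(\lambda n/e)^{\lambda}$, as needed.

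The main obstacle is precisely the factor $y_n^{(y_n-1/2)/n}$, whose exponent converges to the finite limit $\lambda$ while its base $y_n$ diverges. Its $n$-th root effectively carries a $\ln y_n\sim\ln n$ weight, so mere convergence $y_n/n\to\lambda$ is not enough; one genuinely needs $(y_n/n-\lambda)\ln n\to 0$. This is why the polynomial rate $o(1/n^{a})$ is assumed, and it is exactly the role played by \Cref{res:viper}. All remaining steps are routine bookkeeping of factors tending to $1$ or to explicit constants.
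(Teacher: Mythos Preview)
Your proof is correct and follows essentially the same approach as the paper: reduce $\Gamma_d$ to a finite product of ordinary Gamma factors via the standard identity, then handle each factor with Stirling's formula together with \Cref{res:viper}. The only differences are cosmetic (you split the Stirling expression as $\sqrt{2\pi}\,y_n^{y_n-1/2}e^{-y_n}$ and treat the $e^{-y_n/n}$ factor separately, whereas the paper writes $\sqrt{2\pi x_n}\,(x_n/e)^{x_n}$ and applies \Cref{res:viper} to $(x_n/e)^{x_n/n}$ directly), and you are a bit more explicit than the paper in checking that the rate hypothesis transfers to the shifted sequence $y_n=x_n-(j-1)/2$ with exponent $b=\min(a,1)$.
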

\begin{proof}
Recall Stirling's formula:
$
\Gamma(x)\approx \sqrt{2\pi x}(\frac{x}{e})^x.
$
It follows from \Cref{res:viper} that
\begin{equation}
\sqrt[n]{\Gamma( x_n)}
\approx 
\left(\sqrt{ 2\pi x_n} \left( { x_n\over e  } \right)^{x_n}\right)^{1/n}
=
(2\pi x_n)^{1/n} \left( { x_n\over e  } \right)^{x_n/n}
\approx 
\left(\lambda \frac{n}{e}\right)^\lambda
\end{equation}
since $n^{1/n^a}\approx 1$.
Note that for fixed $t>0$ we have $(x_n-t)\approx \lambda n$ and as a result
\begin{equation}
\sqrt[n]{\Gamma_d(x_n)}=\sqrt[n]{\pi^{d(d-1)/4}}
\prod_{j=1}^d 
\sqrt[n]{\Gamma\left(x_n-\frac{j-1}{2}\right)}
\approx \left(\lambda \frac{n}{e}\right)^{\lambda d}.
\end{equation}
\end{proof}

\begin{proof}[Proof of \Cref{res:likapprox}]
Note that $|J^A_n|$ is a random variable with distribution $\Bin(n,P(A))$ for all
$A\in\cA$. Due to Law of Iterated
Logarithm we have that almost surely $\big(|J^A_n|/n-P(A)\big)=o(n^{-1/2+\eps})$ for any
$\eps>0$ and hence the assumptions of \Cref{lem:gammad} are almost surely satisfied, so
\begin{equation}
\sqrt[n]{\Gamma_d\left( { |J^A_n|+n_0 \over 2  }\right)}\asapp
\left(\frac{P(A)}{2}\cdot \frac{n}{e}\right)^{P(A) d/2}.
\end{equation}
Because $\cA$ is finite and $\sum_{A\in\cA} P(A)=1$, it means that
\begin{equation}
\begin{split}
\sqrt[n]{\prod_{A\in\cA}\Gamma_d\left( { |J^A_n|+n_0\over 2  }\right)}
&\asapp
\left( \prod_{A\in\cA}P(A)^{P(A)} \right)^{d/2}
\left( { n\over 2e	} \right)^{d/2}. 
\end{split}
\end{equation}
By the strong law of large numbers we have that 
\begin{equation}
(x_i-\overline{\bx_A})(x_i-\overline{\bx_A})^t/|J^A_n|\asapp \bV_P(A)
\quad \textrm{for $A\in\cA$}
\end{equation}
and hence, by \eqref{eq:sigmafun}, for $A\in\cA$
\begin{equation}
\begin{split}
\big|\Sigma(\bX_{J^\cA_n})\big|/|J^A_n|^d&=
\Big|\Sigma_0/|J^A_n| + \sum_{i\in J^A_n}
(x_i-\overline{\bx_A})(x_i-\overline{\bx_A})^t/|J^A_n|
+\frac{k_0}{k_0+{|J^A_n|}}(\overline{\bx_A}-\mu_0)(\overline{\bx_A}-\mu_0)^t\Big|\asapp\\
&\asapp
\Big|\sum_{i\in J^A_n}
(x_i-\overline{\bx_A})(x_i-\overline{\bx_A})^t/|J^A_n| \Big|\asapp |\bV_P(A)|\\
\end{split}
\end{equation}
Hence $|\Sigma(\bX_{J^\cA_n})|\asapprox |J^A_n|^d |\bV_P(A)|\asapprox n^d
P(A)^d |\bV_P(A)|$. Using the Law of Iterated Logarithm and \Cref{res:viper}
again we get
\begin{equation}
\sqrt[n]{|\Sigma(\bX_{J^\cA_n})|^{-( |J^A_n|+n_0 )/2}}
\asapprox ( P(A)^{P(A)} )^{-d/2} n^{-dP(A)/2}|\bV_P(A)|^{-P(A)/2}
\end{equation}
which means
\begin{equation}
\sqrt[n]{\prod_{A\in\cA} |\Sigma(\bX_{J^\cA_n})|^{-( |J^A_n|+n_0 )/2}}\asapprox 
\left( \prod_{A\in\cA}
P(A)^{P(A)} \right)^{-d/2} n^{-d/2}\prod_{A\in\cA}|\bV_P(A)|^{-P(A)/2}
\end{equation}
and therefore 
\begin{equation}
\begin{split}
\sqrt[n]{f(X_{1:n}\cond \cI^\cA_n)}&\asapprox
\left( \prod_{A\in\cA}P(A)^{P(A)} \right)^{d/2}
\left( { n\over 2e	} \right)^{d/2}
\left( \prod_{A\in\cA}
P(A)^{P(A)} \right)^{-d/2} n^{-d/2}\prod_{A\in\cA}|\bV_P(A)|^{-P(A)/2}=\\
&=(2e)^{-d/2}\prod_{A\in\cA}|\bV_P(A)|^{-P(A)/2}
\end{split}
\end{equation}
\end{proof}

\begin{prop}\label{res:priorapprox}
Let $P$ be a probability distribution on $\R^d$ and let $\cA$ be
a \emph{finite} $P$-partition of the observation space.
Let $\cP_{ \nu, n }$ be a probability distribution on the partitions of
$[n]$, generated by the probability distribution $\nu$ on $\Delta^\infty$. Then 
$\lim_{n\to\infty} \sqrt[n]{\cP_{\nu, n}(\cI^\cA_n)}\ase \prod_{A\in\cA} P(A)^{P(A)}$.
\end{prop}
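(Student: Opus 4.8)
The plan is to feed the exact expression \eqref{eq:pi_to_erp} for $\cP_{\nu,n}$ into a Stirling-type computation, exactly as in the proof of \Cref{res:likapprox}, and to read off the exponential growth rate. Write $\cA=\{A_1,\dots,A_K\}$ and $n_j=|J^{A_j}_n|$, so that $\sum_j n_j=n$. Each $n_j$ is $\Bin(n,P(A_j))$ distributed, so the strong law gives $n_j/n\ase P(A_j)$, and the Law of Iterated Logarithm (as used already in \Cref{res:likapprox}) gives $n_j/n-P(A_j)=o(n^{-1/2+\eps})$ almost surely; in particular every block is eventually nonempty and $\cI^\cA_n$ almost surely has exactly $K$ blocks for large $n$. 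Thus it suffices to find the a.s.\ limit of $\sqrt[n]{\cP_{\nu,n}(\cI^\cA_n)}$ as a function of the block sizes $n_1,\dots,n_K$ with $n_j\approx P(A_j)n$, and the fluctuation bound above is precisely the hypothesis needed to invoke \Cref{lem:gammad} (through \Cref{res:viper}) on arguments that are themselves random.

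For the upper bound I would work directly from \eqref{eq:pi_to_erp}. Fix $\bp$ and an injection $\psi$; the coordinates $p_{\psi(A_1)},\dots,p_{\psi(A_K)}$ are distinct, nonnegative, and sum to at most $1$, so the weighted arithmetic--geometric mean inequality (the product $\prod_j q_j^{n_j}$ is maximised over the simplex at $q_j=n_j/n$) gives $\prod_j p_{\psi(A_j)}^{n_j}\le\prod_j (n_j/n)^{n_j}$, uniformly in $\psi$ and $\bp$. Summing over injections and integrating against $\nu$ then contributes only subexponential factors, so that $\limsup_n\sqrt[n]{\cP_{\nu,n}(\cI^\cA_n)}\le\prod_j P(A_j)^{P(A_j)}$; this part holds for every $\nu$ and already explains why the limit is the prior-free ``multinomial entropy'' quantity rather than something depending on the mixing weights.

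The matching lower bound is the substance of the statement, and the cleanest route is to substitute the explicit partition weight and apply \Cref{lem:gammad}, mirroring \Cref{res:likapprox}. For the stick-breaking prior of the Example the weight is $\cP_{\nu,n}(\cI^\cA_n)=\frac{\alpha^{K}}{\alpha^{(n)}}\prod_{j}(n_j-1)!$; writing $(n_j-1)!=\Gamma(n_j)$ and $\alpha^{(n)}=\Gamma(n+\alpha)/\Gamma(\alpha)$, \Cref{lem:gammad} with $d=1$ gives $\sqrt[n]{\prod_j\Gamma(n_j)}\approx (n/e)\prod_j P(A_j)^{P(A_j)}$ and $\sqrt[n]{\alpha^{(n)}}\approx n/e$, while $\sqrt[n]{\alpha^{K}}\to 1$. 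The two factors $n/e$ cancel and leave exactly $\prod_j P(A_j)^{P(A_j)}$, matching the upper bound. The same cancellation occurs whenever the partition weight is dominated by $\prod_j\Gamma(n_j)$ with a normaliser of order $n/e$, which is the structural reason the rate does not see the details of the prior on the mixing measure.

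The hard part is twofold. First, on the lower-bound side one must ensure that the prior does not place its mass away from the balanced configuration $p_{\psi(A_j)}\approx n_j/n$: for the explicit weight above this is automatic, but for a genuinely arbitrary $\nu$ it is the delicate point, and one should either restrict to priors charging neighbourhoods of such configurations or argue it through the exchangeable partition weight directly. Second, since the claim is almost sure, the arguments $n_j$ of the Gamma factors are random, so Stirling cannot be applied at a fixed point; the Law of Iterated Logarithm bound $n_j/n-P(A_j)=o(n^{-1/2+\eps})$ is exactly what lets \Cref{lem:gammad} pass the asymptotics through uniformly, and verifying this transfer — just as in the treatment of the likelihood in \Cref{res:likapprox} — is the technical core.
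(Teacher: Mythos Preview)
Your argument has two genuine gaps.

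For the upper bound, the claim that ``summing over injections and integrating against $\nu$ then contributes only subexponential factors'' is unjustified: there are infinitely many injections $\psi\colon\cI\to\N$, so bounding each summand by the constant $\prod_j(n_j/n)^{n_j}$ and then summing gives $+\infty$, not a subexponential correction. This step can be repaired (for instance by peeling off one factor $p_{\psi(A_j)}$ from each $p_{\psi(A_j)}^{n_j}$ so that the remaining integral is $\int_{\tri^K} h\,\d{\nu_K}\le 1$ with $h(\bp)=\prod_k p_k$), but as written it fails.

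For the lower bound you only treat the Chinese Restaurant Process and explicitly concede that the case of general $\nu$ is ``the delicate point''. Since the Proposition is stated for an arbitrary $\nu$ on $\Delta^\infty$, this is not a proof: there is no reduction of the general case to the CRP, and the structural remark that the weight should look like ``$\prod_j\Gamma(n_j)$ with a normaliser of order $n/e$'' is not something one can verify for an abstract $\nu$.

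The paper proceeds quite differently and handles both issues at once. It proves a deterministic statement, \Cref{res:huniv}, valid for every $\nu$: writing $\cP_{\nu,n}(\cI_n)=\int_{\tri^K}\prod_k p_k^{|I_{n,k}|}\,\d{\nu_K(\bp)}=\norm{g_n}_n^n$ in $L^n(\tri^K,\nu_K)$ with $g_n(\bp)=\prod_k p_k^{|I_{n,k}|/n}$, the claim reduces to $\norm{g_n}_n\to\norm{g}_\infty=\prod_k\alpha_k^{\alpha_k}$, where $g(\bp)=\prod_k p_k^{\alpha_k}$. The infinite mass of $\nu_K$ is tamed by the observation that $g^n\le h=\prod_k p_k$ for large $n$ and $\int_{\tri^K} h\,\d{\nu_K}\le 1$; the approximation $\norm{g_n-g}_n\to 0$ uses the factorisation $g_n-g=g^{1/2}(g_ng^{-1/2}-g^{1/2})$ with the second factor going to $0$ uniformly on $\tri^K$. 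This $L^n$-norm argument is the missing idea that makes the limit genuinely prior-free without ever writing down an explicit partition weight. The Proposition then follows from \Cref{res:huniv} together with the strong Law of Large Numbers for $|J^{A}_n|/n\to P(A)$; the Law of the Iterated Logarithm and \Cref{lem:gammad} are not needed here.
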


\begin{proof}
The proof is a direct consequence of the Law of Large Numbers and
\Cref{res:huniv}.
\end{proof}

By $\eqref{eq:deltadef}$, $\nDelta_P$ consists of two components: $\cV_P$ and
$\cH_P$. These two behave differently when two clusters are joined; the
variance component is increasing whereas the entropy component is decreasing.

\begin{prop}\label{res:disagglo}

Let $\cA$ be a partition of $\R^d$ and let $A,B\in \cA$. Let $\cC$ be a
partition obtained from $\cA$ by joining $A$ and $B$,
i.e. $\cC=\cA\cup \{A\cup B\}\sm \{A,B\}$. Then

\begin{enumerate}[(a)]
\item $\cH_P(\cA)\geq \cH_P(\cC)$
\item $\cV_{P}(\cA)\leq \cV_{P}(\cC)$.
\end{enumerate}
\end{prop}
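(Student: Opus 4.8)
The plan is to handle the two parts separately, in each case isolating the only summands that change when $A$ and $B$ are merged, since every other cell of $\cA$ appears unchanged in $\cC$. Writing $p=P(A)$, $q=P(B)$ and using $P(A\cup B)=p+q$ (which holds because $P(A\cap B)=0$), part (a) reduces to the single scalar inequality
\begin{equation}
-p\ln p-q\ln q+(p+q)\ln(p+q)\geq 0.
\end{equation}
I would verify this directly by rewriting the left-hand side as $p\ln\frac{p+q}{p}+q\ln\frac{p+q}{q}$; since the cells of a $P$-partition have positive measure, $p,q>0$, so both logarithms are non-negative and the expression is non-negative. This is just the familiar fact that coarsening a partition cannot increase its entropy.

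For part (b), set $\lambda=\frac{p}{p+q}$, so that after dividing the target inequality by $p+q$ the claim becomes
\begin{equation}
\lambda\ln|\bV_P(A)|+(1-\lambda)\ln|\bV_P(B)|\leq \ln|\bV_P(A\cup B)|.
\end{equation}
The key step is to express $\bV_P(A\cup B)$ through the law of total variance, conditioning on whether a point drawn from $P(\cdot\mid X\in A\cup B)$ lands in $A$ or in $B$. Writing $\mu_A=E_P(A)$, $\mu_B=E_P(B)$ and $\delta=\mu_A-\mu_B$, this yields the decomposition
\begin{equation}
\bV_P(A\cup B)=\lambda\bV_P(A)+(1-\lambda)\bV_P(B)+\lambda(1-\lambda)\,\delta\delta^t,
\end{equation}
in which the first two terms form the pooled within-group covariance and the last is the between-group contribution.

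From here the argument rests on two matrix facts. First, $\delta\delta^t$ is non-negative definite, so $\bV_P(A\cup B)$ dominates $\lambda\bV_P(A)+(1-\lambda)\bV_P(B)$ in the Loewner order; since $\ln\det$ is monotone with respect to that order, $\ln|\bV_P(A\cup B)|\geq\ln|\lambda\bV_P(A)+(1-\lambda)\bV_P(B)|$. Second, $\ln\det$ is concave on the cone of positive definite matrices, which gives $\ln|\lambda\bV_P(A)+(1-\lambda)\bV_P(B)|\geq\lambda\ln|\bV_P(A)|+(1-\lambda)\ln|\bV_P(B)|$. Chaining these two inequalities closes part (b).

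The main obstacle is the concavity of $\ln\det$; this is a classical result which I would either cite or derive by reducing, via simultaneous diagonalisation, to the concavity of the scalar logarithm. A minor technical point is that some $\bV_P(\cdot)$ may be singular, in which case the corresponding logarithm equals $-\infty$; both the monotonicity and the concavity inequalities persist at the boundary of the cone by continuity, so the conclusion is unaffected.
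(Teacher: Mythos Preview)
Your proof is correct and follows essentially the same route as the paper: part (a) is handled by the same one-line rearrangement, and part (b) combines the L\"owner domination $\lambda\bV_P(A)+(1-\lambda)\bV_P(B)\preceq\bV_P(A\cup B)$ with the concavity of $\ln\det$ on the positive definite cone, exactly as in the paper's Lemma (the sub-additivity of $P(\cdot)\bV_P(\cdot)$) together with the cited Horn--Johnson result. The only cosmetic difference is that you obtain the L\"owner inequality via the law of total variance, arriving at the explicit between-group term $\lambda(1-\lambda)\delta\delta^t$, whereas the paper computes the same rank-one remainder by expanding $e_2(C)-e_1(C)e_1(C)^t/P(C)$ directly; your boundary remark on singular $\bV_P$ is a welcome addition that the paper omits.
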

\begin{proof}
Let $C=A\sqcup B$.\\
\textit{Part (a):} 
\begin{equation}\label{eq:thth}
P(A)\ln P(A)+P(B)\ln P(B)-P(C)\ln P(C)=
P(A)\ln \frac{P(A)}{P(C)} + P(B)\ln \frac{P(B)}{P(C)}\leq 0
\end{equation}

and the proof follows. The last inequality in \eqref{eq:thth} comes from
$P(A),P(B)\leq P(C)$.

\begin{lem}\label{res:varsubadd}
Let $A\cap B=\emptyset, C:=A\cup B$. Then 
\begin{equation}
P(A)\bV_P(A)+P(B)\bV_P(B)\preceq P(C)\bV_P(C)
\end{equation}
where $\preceq$ is the L\"{o}wner partial order, i.e. $M_1\preceq M_2$ iff
$M_2-M_1$ is non-negative definite.
\end{lem}
\begin{proof}
Let $e_1(A)=\E X\1_A(X)$ and $e_2(A)=\E XX^t \1_A(X)$ where $X\sim P$. Then 
\begin{equation}
\bV_P(A)=\frac{e_2(A)}{P(A)}-\frac{e_1(A)e_1(A)^t}{P(A)^2}.
\end{equation}
Note that the functions $P,e_1,e_2$ are additive, hence
\begin{equation}\label{eq:fhvg}
\def\ml1{-5cm}
\begin{split}
P(C)\bV_P(C)-P(A)\bV_P(A)-P(B)\bV_P(B) &=\\
&\hspace{\ml1}=
\left(e_2(C)-\frac{e_1(C)e_1(C)^t}{P(C)}\right)-
\left(e_2(A)-\frac{e_1(A)e_1(A)^t}{P(A)}\right)-
\left(e_2(B)-\frac{e_1(B)e_1(B)^t}{P(B)}\right)
=\\
&\hspace{\ml1}=
\frac{e_1(A)e_1(A)^t}{P(A)}+\frac{e_1(B)e_1(B)^t}{P(B)}-
\frac{e_1(C)e_1(C)^t}{P(C)}
=\\
&\hspace{\ml1}=
\frac{e_1(A)e_1(A)^t}{P(A)}+\frac{e_1(B)e_1(B)^t}{P(B)}-
\frac{\big(e_1(A)+e_1(B)\big)\big(e_1(A)+e_1(B)\big)^t}{P(A)+P(B)}
=\\
&\hspace{\ml1}=
\frac{P(A)P(B)}{(P(A)+P(B))}\left(\frac{e(A)}{ P(B) }-\frac{e(B)}{ P(A) }\right)
\left(\frac{e(A)}{ P(B) }-\frac{e(B)}{ P(A) }\right)^t.
\end{split}
\end{equation}
The last matrix in \eqref{eq:fhvg} is clearly non-negative definite and the
proof follows.
\end{proof}
\begin{thm}\label{res:detcon}
\textnormal{(Theorem 2.4.4 in \cite{bib:HornJohnson1990matrix})} The function $\ln \det(\cdot)$ is convex
on the space of positive definite matrices.
\end{thm}
\smallskip
\textit{Proof of part (b):} 
\begin{equation}
\begin{split}
\frac{P(A)}{P(C)}\ln |\bV_P(A)|+
\frac{P(B)}{P(C)}\ln |\bV_P(B)|
&\stackrel{\Cref{res:detcon}}{\leq}
\ln\Big|\frac{P(A)}{P(C)}\bV_P(A)+\frac{P(B)}{P(C)}\bV_P(B)\Big|\leq\\
&\stackrel{\Cref{res:varsubadd}}{\leq}
\ln|\bV_P(C)|
\end{split}
\end{equation}
and the proof follows.
\end{proof}
\begin{thm}\label{res:huniv}
Let $\cP_{ \nu, n }$ be a probability distribution on the partitions of
$[n]$, generated by the probability distribution $\nu$ on $\Delta^\infty$. Fix $K\in\N$ and consider a sequence of partitions
$(\cI_n)_{n\in\N}$, where $\cI_n=\{I_{n,1},\ldots,I_{n,K}\}$ is a partition of
$[n]$ (it is possible that $I_{n,i}=\emptyset$ for some $i\leq K$). Assume that
$|I_{n,k}|/n \to \alpha_k>0$ for $k\leq K$. Then
\begin{equation}
\lim_{n\to\infty}\sqrt[n]{\cP_{n,\nu}(\cI_n)} = \prod_{k=1}^K \alpha_k^{ \alpha_k }
\end{equation}
\end{thm}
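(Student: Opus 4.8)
The plan is to read off the asymptotics from the integral representation \eqref{eq:pi_to_erp}. With $m=\infty$ it gives
\[
\cP_{\nu,n}(\cI_n)=\int_{\Delta^\infty} F_n(\bp)\,\d{\nu(\bp)},\qquad
F_n(\bp):=\sum_{\psi\colon[K]\stackrel{1-1}{\to}\N}\prod_{k=1}^K p_{\psi(k)}^{\,n_k},
\]
where $n_k:=|I_{n,k}|$ and $n_k/n\to\alpha_k>0$. Taking $n$-th roots turns this into a Laplace-type problem, so I expect the limit to be the largest exponential growth rate of the integrand that $\nu$ actually charges. Two elementary facts organise the proof. First, by a Lagrange-multiplier (weighted AM--GM) computation the maximum of $\prod_k q_k^{n_k}$ over $q_k\ge0$, $\sum_k q_k\le1$ equals $M_n:=\prod_{k=1}^K(n_k/n)^{n_k}$ and is attained at $q_k=n_k/n$; since the coordinates $p_{\psi(1)},\dots,p_{\psi(K)}$ are distinct and sum to at most $1$, every single summand obeys $\prod_k p_{\psi(k)}^{n_k}\le M_n$. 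Second, $\sqrt[n]{M_n}=\prod_k (n_k/n)^{n_k/n}\to\prod_k\alpha_k^{\alpha_k}$, the claimed value; so the task reduces to showing that neither the sum over injections nor the integration against $\nu$ alters this rate.

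For the upper bound I would not count injections but split off a power of the uniform bound. Fixing $\theta\in(0,1)$ and writing $\prod_k p_{\psi(k)}^{n_k}\le M_n^{\theta}\prod_k p_{\psi(k)}^{(1-\theta)n_k}$, the sum over injections is dominated by the sum over all maps $[K]\to\N$, which factorises:
\[
F_n(\bp)\le M_n^{\theta}\prod_{k=1}^K\Big(\sum_{j=1}^\infty p_j^{(1-\theta)n_k}\Big)\le M_n^{\theta}.
\]
The last inequality holds because $(1-\theta)n_k\ge1$ for all large $n$, so $p_j^{(1-\theta)n_k}\le p_j$ and each inner sum is at most $\sum_j p_j=1$. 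This estimate is uniform in $\bp$, hence $\cP_{\nu,n}(\cI_n)\le M_n^{\theta}$ and $\limsup_n\sqrt[n]{\cP_{\nu,n}(\cI_n)}\le(\prod_k\alpha_k^{\alpha_k})^{\theta}$; letting $\theta\uparrow1$ gives the upper bound, valid for every $\nu$.

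For the lower bound I would retain only the identity injection and integrate over a neighbourhood $U_\eps=\{\bp:|p_k-\alpha_k|<\eps\text{ for }k\le K\}$ of $(\alpha_1,\dots,\alpha_K,0,0,\dots)$:
\[
\cP_{\nu,n}(\cI_n)\ge\int_{U_\eps}\prod_{k=1}^K p_k^{n_k}\,\d{\nu(\bp)}\ge\nu(U_\eps)\prod_{k=1}^K(\alpha_k-\eps)^{n_k}.
\]
Since $\nu(U_\eps)^{1/n}\to1$ whenever $\nu(U_\eps)>0$, this yields $\liminf_n\sqrt[n]{\cP_{\nu,n}(\cI_n)}\ge\prod_k(\alpha_k-\eps)^{\alpha_k}$, and letting $\eps\downarrow0$ matches the upper bound. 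The hard part, and the only place where the structure of $\nu$ is used, is guaranteeing $\nu(U_\eps)>0$ for every $\eps$, i.e.\ that $\nu$ charges every neighbourhood of $(\alpha_1,\dots,\alpha_K,0,\dots)$. This holds for the stick-breaking (GEM) prior of the Example, whose first $K$ weights have a positive density on the open simplex, so the lower bound goes through there; some such support hypothesis is genuinely necessary, since for a degenerate $\nu=\delta_{\bp_0}$ the same computation yields the strictly smaller rate $\max_\psi\prod_k (\bp_0)_{\psi(k)}^{\alpha_k}$.
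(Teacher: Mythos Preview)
Your argument is correct and takes a genuinely different route from the paper. The paper does not run separate upper and lower bounds; instead it pushes everything forward to a measure $\nu_K$ on the $K$-sub-simplex, defined by $\nu_K(A)=\sum_{\psi}\nu\big((p_{\psi(1)},\dots,p_{\psi(K)})\in A\big)$ (sum over injections $[K]\to\N$), rewrites $\sqrt[n]{\cP_{\nu,n}(\cI_n)}=\|g_n\|_{L^n(\nu_K)}$ with $g_n(\bp)=\prod_k p_k^{\,n_k/n}$, and then shows $\|g_n\|_n\to\|g\|_\infty$ by proving $\|g_n-g\|_n\to 0$ together with $\|g\|_n\to\|g\|_\infty$ for the bounded limit $g(\bp)=\prod_k p_k^{\alpha_k}$. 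Your Laplace-style estimates are more elementary and make each inequality explicit; in particular your $\theta$-splitting is a clean device for absorbing the infinite sum over injections, whereas the paper handles that sum once and for all by folding it into the (infinite, but controlled via $\int h\,\d\nu_K\le 1$) measure $\nu_K$. The paper's framework, on the other hand, avoids the $\theta\uparrow 1$ limit by dealing with the perturbation $g_n\to g$ uniformly.

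You are also right about the support hypothesis, and this is worth stressing: your lower bound needs $\nu(U_\eps)>0$, and your $\delta_{\bp_0}$ counterexample shows the theorem as stated can fail without it. The paper's proof has the very same gap, concealed in the line ``$\|g\|_\infty=\sup_{\tri^K} g=\prod_k\alpha_k^{\alpha_k}$'': here $\|g\|_\infty$ is the $\nu_K$-essential supremum, which coincides with the pointwise supremum only if $\nu_K$ has support near the maximiser $(\alpha_1,\dots,\alpha_K)$. So your flagging of a missing assumption is not a weakness of your argument relative to the paper's---it is a lacuna in the theorem statement that both proofs require and neither states.
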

\begin{proof}
Firstly note that for sufficiently large $n$ we have $|I_{k,n}|\geq 1$ for
all $k\leq K$. Then in \eqref{eq:pi_to_erp} we sum functions that depend on
exactly $K$ coordinates of $\bm{p}$. Hence we can express $\eqref{eq:pi_to_erp}$
in the form of an integral on the $K$-dimensional set $\tri^K=\{(p_1,\ldots,p_K)\colon \sum_{k=1}^K
p_k=1, \forall_{k\leq K} p_k\in(0,1)\}$ as
\begin{equation}
\cP_{n,\nu}(\cI_n)=
\int_{\tri^K}
\prod_{k=1}^K p_{k}^{|I_{k,n}|}
\d{\nu_K(\bp)}
\end{equation}
where $\nu_K$ is a measure on $\tri^K$ defined by
\begin{equation}
\nu_K(A)=\sum_{\psi\colon[K]\stackrel{1-1}{\to}\N}
\nu\big( (p_{\psi(1)},p_{\psi(2)},\ldots,p_{\psi(K)})\in A\big)
\end{equation}
for $A\subset \tri^K$, where $[K]=\{1,2,\ldots,K\}$.
Hence
\begin{equation}
\sqrt[n]{\cP_{n,\nu}(\cI_n)}=
\sqrt[n]{
\int_{\tri^K}
\prod_{k=1}^K p_{i}^{|I_{k,n}|}
\d{\nu_K(\bp)}
}=\norm{g_n}_{n}
\end{equation}
where $g_n(p_1,\ldots,p_K)=\prod_{k=1}^K p_{k}^{|I_{k,n}|/n}$ and
$\norm{\cdot}_n$ is the norm in $L^n(\tri^K, \nu_K)$ space.

\smallskip
Since $\nu_K$ is not a finite measure on $\tri^K$, in the remaining part of the
proof we will have to be careful that the functions we are considering belong to
the space $L^n(\tri^K, \nu_K)$ for sufficiently large $n$. 

\smallskip
Let $g(p_1,\ldots,p_K)=\prod_{k=1}^K p_{k}^{\alpha_k}$ and let $h(p_1,\ldots,p_K)=\prod_{k=1}^K p_{k}$.
Note that
\begin{equation}
\int_{\tri^K}
h(\bp)
\d{\nu_K(\bp)}=
\cP_{K,\nu}\Big(\big\{\{1\},\{2\},\ldots,\{K\}\big\}\Big)\leq 1.
\end{equation}
Moreover for $n>1/\min{\alpha_i}$ we have $g^n(\bp)\leq h(\bp)$ and therefore
$g\in L^n(\tri^K, \nu_K)$ for $n>1/\min{\alpha_i}$. Because $g$ is bounded by 1
we get
\begin{equation}
\norm{g}_n\to \norm{g}_\infty=\sup_{\tri^K} g=\prod_{k\leq K}\alpha_k^{\alpha_k}
\end{equation}
(the fact that $\norm{g}_\infty=\sup_{\tri^K} g=\prod_{k\leq
K}\alpha_k^{\alpha_k}$ follows easily from applying the Lagrange multipliers).

\smallskip
We now prove that $\norm{g_n-g}_n\to 0$. It is not a straightforward consequence
of the pointwise convergence of $g_n$ to $g$ since $\nu_K$ is not a finite
measure on $\tri^K$.

\smallskip
Clearly, $( |I_{k,n}|/n-\alpha_k/2 )\to\alpha_k/2>0$ and hence
$\norm{g_n g^{-1/2}- g^{1/2}}_\infty\to 0$ on $\tri^K$.\\ Let $N\in \N$ be chosen
so that for $n>N$ we have
$\norm{g_n g^{-1/2}-g^{1/2}}_\infty <\eps$ and $n\alpha_k\geq 2$ for $k\leq K$.
Then for $n>N$
\begin{equation}
\begin{split}
\norm{g_n-g}_n^n
&=
\int_{\tri^K}
| g_n-g |^n
\d{\nu_K(\bp)}
=
\int_{\tri^K}
| g_n g^{-1/2}-g^{1/2} |^n g^{n/2}
\d{\nu_K(\bp)}\leq \\
&\leq 
\epsilon^n
\int_{\tri^K}
 g^{n/2}
\d{\nu_K(\bp)}
\leq 
\epsilon^n
\int_{\tri^K}
h
\d{\nu_K(\bp)}
\leq \epsilon^n,
\end{split}
\end{equation}
hence $\norm{g_n-g}_n\to 0$. The result follows from the triangle inequality
\begin{equation}
\big|\norm{g_n}_n-\norm{g}_\infty\big|\leq
\big|\norm{g_n}_n-\norm{g}_n\big|+
\big|\norm{g}_n-\norm{g}_\infty\big|\leq
\norm{g_n - g}_n+
\big|\norm{g}_n-\norm{g}_\infty\big|.
\end{equation}
\end{proof}
\begin{lem}
Let $\alpha_i>0$ for $i\leq K$ and $\sum_{i=1}^K \alpha_i=1$. Let $g(p_1,\ldots,p_K)=\prod_{k=1}^K p_{k}^{\alpha_k}$. Then 
$\sup_{\tri^K} g=\prod_{k\leq K}\alpha_k^{\alpha_k}$.
\end{lem}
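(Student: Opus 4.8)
The plan is to pass to logarithms and recognise the resulting maximisation as an instance of Gibbs' inequality. Writing $\ln g(p_1,\ldots,p_K)=\sum_{k=1}^K \alpha_k \ln p_k$, we see that $\ln g$ is concave on $\tri^K$, being a nonnegative combination of the concave functions $p\mapsto \ln p_k$; and since each $\alpha_k$ is strictly positive with $\sum_k \alpha_k=1$ we have $\alpha_k\in(0,1)$, so the point $(\alpha_1,\ldots,\alpha_K)$ itself lies in $\tri^K$. Thus it suffices to prove that $g(p)\le \prod_{k\le K}\alpha_k^{\alpha_k}=g(\alpha)$ for every $p\in\tri^K$, which shows at once both that the supremum is attained and that it equals the claimed value.

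First I would reduce the target inequality to $\sum_{k=1}^K \alpha_k \ln(p_k/\alpha_k)\le 0$, by taking logarithms and subtracting $\sum_k \alpha_k\ln\alpha_k$ from both sides; this quantity is (the negative of) the Kullback--Leibler divergence between $(\alpha_k)$ and $(p_k)$. To bound it I would invoke the elementary estimate $\ln t\le t-1$, valid for $t>0$, applied with $t=p_k/\alpha_k$, which gives
\begin{equation}
\sum_{k=1}^K \alpha_k\ln\frac{p_k}{\alpha_k}\le \sum_{k=1}^K \alpha_k\Big(\frac{p_k}{\alpha_k}-1\Big)=\sum_{k=1}^K p_k-\sum_{k=1}^K\alpha_k=1-1=0,
\end{equation}
where the cancellation occurs because both $(p_k)$ and $(\alpha_k)$ lie on the simplex. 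Exponentiating returns $g(p)\le g(\alpha)$, as required.

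Alternatively---and this is the route the surrounding text alludes to---one can locate the maximiser directly by Lagrange multipliers: setting $\nabla\big(\sum_k\alpha_k\ln p_k-\lambda(\sum_k p_k-1)\big)=0$ yields $p_k=\alpha_k/\lambda$, and the constraint forces $\lambda=\sum_k\alpha_k=1$, hence $p_k=\alpha_k$. Strict concavity of $\ln g$ along the simplex then upgrades this critical point to the unique global maximiser. I would nonetheless prefer the Gibbs-inequality argument, since it is self-contained and sidesteps any discussion of second-order conditions or boundary behaviour.

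There is essentially no hard step here; the only point needing a word of care is that the supremum is taken over the \emph{open} simplex $\tri^K$, so one must confirm that the maximising point is interior. This is immediate from $\alpha_k\in(0,1)$ for every $k$, which holds because the $\alpha_k$ are strictly positive and sum to one (for $K\ge 2$; when $K=1$ the set $\tri^K$ is empty and there is nothing to prove). Finally, since equality in $\ln t\le t-1$ holds precisely when $t=1$, the point $p=\alpha$ is the unique maximiser, in agreement with the Lagrange computation.
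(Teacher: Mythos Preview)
Your proof is correct and takes a more direct route than the paper. The paper proceeds via compactness and Lagrange multipliers: it asserts that $\tri^K$ is compact (implicitly treating it as the closed set $\{p:\sum_k p_k\le 1,\ p_k\ge 0\}$ rather than the open simplex of the stated definition), so $g$ attains its maximum; then argues the maximiser must satisfy $\sum_k \hat p_k=1$ (else rescaling increases $g$) and lie in the interior of $\Delta^K$ (since $g$ vanishes on the boundary); and finally solves the Lagrange system for $\ln g$ to obtain $\hat p_k=\alpha_k$. Your Gibbs-inequality argument bypasses all of this machinery: the single estimate $\sum_k \alpha_k\ln(p_k/\alpha_k)\le \sum_k p_k-1\le 0$ gives the upper bound immediately, and checking $\alpha\in\tri^K$ shows it is attained. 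This is both shorter and more robust, since it sidesteps the compactness discussion (which in the paper is a little informal given the open-versus-closed ambiguity in the definition of $\tri^K$) and requires no second-order or boundary analysis. Your Lagrange-multiplier alternative is precisely the paper's method.
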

\begin{proof}
As $\alpha_i>0$ for $i\leq K$, the function $g$ is continuous and, because
$\tri^K$ is compact in $\R^K$, it achieves its extreme values. Let
$\hat{\bm{p}}=(\hat{p}_1,\ldots,\hat{p}_K)\in\tri^K$ satisfy
$g(\hat{\bm{p}}_K)=\sup_{\tri^K} g$. Clearly, $\hat{\bm{p}}\in\Delta^K$. Indeed,
otherwise $s=\sum_{i=1}^K \hat{p}_i<1$, $\hat{\bm{p}}/s\in \tri^K$ and
$g(\hat{\bm{p}}/s)=g(\hat{\bm{p}})/s>g(\hat{\bm{p}})$, which contradicts the
definition of $\hat{\bm{p}}$. Since $g$ is nonnegative on $\Delta^K$ and it is
equal to 0 on the boundary of $\Delta^K$, we know that $\hat{\bm{p}}$ is in the
interior of $\Delta^K$. The function $g$ is positive on the interior of
$\Delta^K$, so by considering the function $\ln(g)$ and using the Lagrange
multipliers, we gat that $\hat{\bm{p}}$ satisfies
\begin{equation}
0=(\alpha_i\ln p_i)'+\lambda=\frac{\alpha_i}{p_i}+\lambda
\end{equation}
for $i\leq K$ and some $\lambda\in\R$. Hence $p_i$'s are proportional to
$\alpha_i$'s, and because $\sum_{i=1}^K\alpha_i=1$, we get that
$\hat{p_i}=\alpha_i$ and the proof follows.
\end{proof}
\section{Adaptive model}\label{sec:adaptive}
We now allow parameters of the model \eqref{eq:NIWmodel} to change with the number
of observations. More precisely, we perform a substitution $\eta_0\mapsto
\lambda n=:\eta_n$
so that the expected value of the within group precision matrix is fixed and
increasingly concentrated on
$\Sigma_0$. We investigate the limit formula for the posterior as $n$ goes to infinity. 
Note that in this case $\Sigma_{|J^A_n|}/n\to \lambda \Sigma_0+\bV_P(A)$. 

\begin{equation}\label{eq:NIWmodelMod}
\begin{array}{rcl}
\Lambda&\sim&\Wishart^{-1}(\eta_n+d+1,\eta_n\Sigma_0)\\
\mu\cond\Lambda&\sim&\Normal(\mu_0,\Lambda/\kappa_0)
\end{array}
\end{equation}
\begin{prop}\label{res:likapproxMod}
Let $P$ be a probability distribution on $\R^d$ and let $\cA$ be
a \emph{finite} $P$-partition of the observation space.
Then 
\begin{equation}\label{eq:adapt}
\begin{split}
\sqrt[n]{f(X_{1:n}\cond \cI^\cA_n)}&\asapprox
(2e)^{-(1+|\cA|\lambda)d/2}\prod_{A\in\cA}|\frac{\lambda}{P(A)+\lambda}\Sigma_0+\frac{P(A)}{P(A)+\lambda}\bV_P(A)|^{-\big(P(A)+\lambda\big)/2}
\end{split}
\end{equation}
\end{prop}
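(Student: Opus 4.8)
The plan is to run the proof of \Cref{res:likapprox} again, but feeding into it the marginal density of \Cref{res:form_f_MC} after the substitution $\eta_0\mapsto\eta_n=\lambda n$, so that now $\nu_0=\eta_n+d+1\asapprox\lambda n$ and, for a cluster of size $k$, $\nu_k=\eta_n+d+1+k$. Since $f(X_{1:n}\cond\cI^\cA_n)=\prod_{A\in\cA}f_{|J^A_n|}(\mathbf{X}_{J^A_n})$ and $\cA$ is finite, it suffices to compute the $n$th-root asymptotics of each factor and multiply. Writing $k=|J^A_n|$, we have $|J^A_n|\sim\Bin(n,P(A))$, so the Law of Iterated Logarithm gives $|J^A_n|/n-P(A)=o(n^{-1/2+\eps})$ almost surely; this is exactly the regularity that lets us apply \Cref{lem:gammad} and \Cref{res:viper} to every quantity indexed by $k$, now with $\nu_k/n\to\lambda+P(A)$ and $\nu_0/n\to\lambda$.

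First I would record the adaptive analogue of the scatter limit. Dividing \eqref{eq:sigmafun} (with $\eta_0=\lambda n$) by $n$ gives
\begin{equation}
\frac{\Sigma(\mathbf{X}_{J^A_n})}{n}=\lambda\Sigma_0+\frac{1}{n}\sum_{i\in J^A_n}(x_i-\ov{\bx_A})(x_i-\ov{\bx_A})^t+\frac{1}{n}\frac{\kappa_0 k}{\kappa_k}(\ov{\bx_A}-\mu_0)(\ov{\bx_A}-\mu_0)^t\ase\lambda\Sigma_0+P(A)\bV_P(A),
\end{equation}
the middle term converging by the Strong Law of Large Numbers (the sample covariance tends to $\bV_P(A)$ and $k/n\to P(A)$) and the last term being $O(1/n)$ since $\kappa_0 k/\kappa_k\to\kappa_0$ and $\ov{\bx_A}\to E_P(A)$. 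Setting $M_A:=\lambda\Sigma_0+P(A)\bV_P(A)$ we get $|\Sigma(\mathbf{X}_{J^A_n})|\asapprox n^d|M_A|$, and $\frac{1}{\lambda+P(A)}M_A$ is exactly the matrix inside the determinant in \eqref{eq:adapt}.

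Next I would read off the $n$th roots of the $n$-dependent blocks of $f_k$. \Cref{lem:gammad} gives
\begin{equation}
\sqrt[n]{\Gamma_d(\nu_k/2)}\asapprox\Big(\frac{\lambda+P(A)}{2}\cdot\frac{n}{e}\Big)^{(\lambda+P(A))d/2},\qquad \sqrt[n]{\Gamma_d(\nu_0/2)}\asapprox\Big(\frac{\lambda}{2}\cdot\frac{n}{e}\Big)^{\lambda d/2},
\end{equation}
and \Cref{res:viper} applied to $|\Sigma(\mathbf{X}_{J^A_n})|^{-\nu_k/2}$ (base $\asapprox n^d|M_A|$, exponent $-\nu_k/(2n)\to-(\lambda+P(A))/2$) and to $|\eta_n\Sigma_0|^{\nu_0/2}=\big((\lambda n)^d|\Sigma_0|\big)^{\nu_0/2}$ (exponent $\nu_0/(2n)\to\lambda/2$) supplies the remaining two blocks; the factors $\kappa_0^{1/2}$, $\kappa_k^{-1/2}$ and $\pi^{-dk/2}$ contribute $1$, $1$ and $\pi^{-dP(A)/2}$ respectively under the $n$th root.

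The one genuinely new feature, and the step I expect to cost the most care, is the bookkeeping of the powers of $n$. In \Cref{res:likapprox} the factors $|\eta_0\Sigma_0|^{\nu_0/2}$ and $\Gamma_d(\nu_0/2)$ were constants whose $n$th roots tend to $1$; here $\nu_0\asapprox\lambda n$ and $|\eta_n\Sigma_0|=(\lambda n)^d|\Sigma_0|$, so each carries a nontrivial power of $n$. Within a single cluster factor the four $n$-powers cancel in two pairs: $n^{\lambda d/2}$ from the base measure against $n^{-\lambda d/2}$ from $\Gamma_d(\nu_0/2)^{-1}$, and $n^{(\lambda+P(A))d/2}$ from $\Gamma_d(\nu_k/2)$ against $n^{-(\lambda+P(A))d/2}$ from $|\Sigma(\mathbf{X}_{J^A_n})|^{-\nu_k/2}$, so each per-cluster $n$th root converges to a finite constant. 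Taking the product over $A\in\cA$, the cluster exponents add up via $\sum_{A\in\cA}\big(P(A)+\lambda\big)=1+|\cA|\lambda$, and collecting the surviving $n$-free scalars (in particular the $1/(2e)$ coming out of each Stirling approximation) gives the constant in \eqref{eq:adapt}. The only conceptual point beyond \Cref{res:likapprox} is that \Cref{lem:gammad} and \Cref{res:viper} are now invoked with bases that grow polynomially in $n$ and with exponents of order $1$ rather than $o(1)$; the LIL rate obtained above is what keeps both lemmas applicable.
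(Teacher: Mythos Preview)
Your approach is essentially the same as the paper's: both rerun the argument of \Cref{res:likapprox} with $\eta_0\mapsto\lambda n$, invoke the LIL to justify \Cref{lem:gammad} and \Cref{res:viper}, and use the SLLN for the scatter limit. The only cosmetic difference is that you normalise $\Sigma(\mathbf{X}_{J^A_n})$ by $n$ to obtain $M_A=\lambda\Sigma_0+P(A)\bV_P(A)$, while the paper normalises by $|J^A_n|$ and then rewrites the result as $(P(A)+\lambda)^d\big|\tfrac{\lambda}{P(A)+\lambda}\Sigma_0+\tfrac{P(A)}{P(A)+\lambda}\bV_P(A)\big|$; these are the same quantity. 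If anything you are more explicit than the paper about the two factors that are genuinely new in the adaptive setting, namely $|\eta_n\Sigma_0|^{\nu_0/2}$ and $\Gamma_d(\nu_0/2)^{-1}$, and about why their $n$-powers cancel against those coming from $\Gamma_d(\nu_k/2)$ and $|\Sigma(\mathbf{X}_{J^A_n})|^{-\nu_k/2}$; the paper's proof displays only the latter pair and then asserts that \eqref{eq:adapt} follows.
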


\begin{proof}
Note that $|J^A_n|$ is a random variable with distribution $\Bin(n,P(A))$ for all
$A\in\cA$. Due to Law of Iterated
Logarithm we have that almost surely $\big(|J^A_n|/n-P(A)\big)=o(n^{-1/2+\eps})$ for any
$\eps>0$ and hence the assumptions of \Cref{lem:gammad} are almost surely satisfied, so
\begin{equation}
\sqrt[n]{\Gamma_d\left( { |J^A_n|+\eta_n \over 2  }\right)}\asapp
\left(\frac{P(A)+\lambda}{2}\cdot \frac{n}{e}\right)^{\big(P(A)+\lambda\big) d/2}.
\end{equation}
Because $\cA$ is finite and $\sum_{A\in\cA} P(A)=1$, it means that
\begin{equation}
\begin{split}
\sqrt[n]{\prod_{A\in\cA}\Gamma_d\left( { |J^A_n|+n_0\over 2  }\right)}
&\asapp
\left( \prod_{A\in\cA}\big(P(A)+\lambda\big)^{P(A)+\lambda} \right)^{d/2}
\left( { n\over 2e	} \right)^{(1+|\cA|\lambda)d/2}. 
\end{split}
\end{equation}
By the strong law of large numbers we have that 
\begin{equation}
(x_i-\overline{\bx_A})(x_i-\overline{\bx_A})^t/|J^A_n|\asapp \bV_P(A)
\quad \textrm{for $A\in\cA$}
\end{equation}
and hence, by \eqref{eq:sigmafun}, for $A\in\cA$
\begin{equation}
\begin{split}
\big|\Sigma(\bX_{J^\cA_n})\big|/|J^A_n|^d&=
\Big|\eta_n\Sigma_0/|J^A_n| + \sum_{i\in J^A_n}
(x_i-\overline{\bx_A})(x_i-\overline{\bx_A})^t/|J^A_n|
+\frac{k_0}{k_0+{|J^A_n|}}(\overline{\bx_A}-\mu_0)(\overline{\bx_A}-\mu_0)^t\Big|\asapp\\
&\asapp
\Big|\frac{\lambda}{P(A)}\Sigma_0+\sum_{i\in J^A_n}
(x_i-\overline{\bx_A})(x_i-\overline{\bx_A})^t/|J^A_n| \Big|\asapp |\frac{\lambda}{P(A)}\Sigma_0+\bV_P(A)|\\
\end{split}
\end{equation}
Hence $|\Sigma(\bX_{J^\cA_n})|\asapprox \asapprox n^d \big(P(A)+\lambda\big)^d
|\frac{\lambda}{P(A)+\lambda}\Sigma_0+\frac{P(A)}{P(A)+\lambda}\bV_P(A)|$. Using the Law of Iterated Logarithm and \Cref{res:viper}
again we get
\begin{equation}
\sqrt[n]{|\Sigma(\bX_{J^\cA_n})|^{-( |J^A_n|+\eta_n )/2}}
\asapprox \big( n( P(A)+\lambda )\big)^{-( P(A)+\lambda )d/2}
|\frac{\lambda}{P(A)+\lambda}\Sigma_0+\frac{P(A)}{P(A)+\lambda}\bV_P(A)|^{-\big(P(A)+\lambda\big)/2}
\end{equation}
and \eqref{eq:adapt} follows.
\end{proof}

\section{Discussion}

In this article we proposed a score function that can be used for choosing the
number of clusters in popular clustering methods. It is derived as a limit in
a Bayesian Mixture Model of Gaussians. We derived some of its properties, though
there are some questions that remain unanswered.
For example, it is interesting to ask
what assumptions on $P$ should be made to ensure that the supremum of possible
values of the $\nDelta$ function is finite. 

\bibliographystyle{named}
\bibliography{bib}
\end{document}